\pdfoutput=1
\documentclass[a4paper,twocolumn,11pt,unpublished,aps]{quantumarticle}
\pdfoutput=1

\usepackage[T1]{fontenc}
\usepackage{microtype}
\usepackage{amsmath,amssymb,amsfonts}
\usepackage{mathtools}
\usepackage{bm}
\usepackage{physics}

\usepackage{graphicx}
\usepackage{xcolor}
\usepackage{booktabs}
\usepackage{enumitem}
\usepackage{float}
\usepackage{algorithm}
\usepackage{multirow}
\usepackage{algpseudocode}
\usepackage[numbers,sort&compress]{natbib}

\usepackage{amsthm}
\usepackage{hyperref}

\newtheorem{theorem}{Theorem}[section]
\newtheorem{proposition}[theorem]{Proposition}

\theoremstyle{definition}
\newtheorem{definition}[theorem]{Definition}
\newtheorem{assumption}[theorem]{Assumption}

\theoremstyle{remark}
\newtheorem{remark}[theorem]{Remark}

\hypersetup{
  colorlinks=true,
  linkcolor=blue,
  citecolor=blue,
  urlcolor=blue,
  pdftitle={Reliable Adaptive Stopping for Krylov-Shadow Quantum Fisher
    Information Estimation},
  pdfauthor={Erjie Liu and Yangshuai Wang},
  pdfsubject={Quantum journal manuscript}
}

\newcommand{\QFI}{\mathcal{F}}
\let\Tr\relax
\DeclareMathOperator{\Tr}{Tr}

\newcommand{\eps}{\varepsilon}

\begin{document}

\title{Reliable Adaptive Stopping for Krylov-Shadow Quantum Fisher
Information Estimation}

\author{Erjie Liu}
\affiliation{Department of Mathematics, National University of Singapore,
10 Lower Kent Ridge Road, 119076, Singapore}
\author{Yangshuai Wang}
\affiliation{Department of Mathematics, National University of Singapore,
10 Lower Kent Ridge Road, 119076, Singapore}
\date{}

\begin{abstract}
Scalable quantum Fisher information (QFI) estimation becomes actionable
when the numerical estimate is paired with a trustworthy stopping
decision.
Krylov-shadow QFI estimation has two resource directions: the Krylov
order sets the population resolution, whereas the sample count controls
statistical uncertainty at that resolution.
We show that treating these directions as one can produce false stops,
where a width-based rule reports a narrow interval around a biased
low-order estimate.
We turn adaptive stopping into a two-component reliability problem,
separating Krylov truncation from finite-sample uncertainty, and
introduce AKS-QFI, a component-aware stopping interface for
Krylov-shadow estimators.
On a noisy mixed-state benchmark at $n=4$ qubits, width-only stopping
has false-stop rates from $0.16$ to $0.68$.
Under the same resource limit, AKS-QFI returns no false success
declarations; after recalibrating Krylov resolution and sample counts,
it returns accurate success declarations at true 5\% relative tolerance.
These results make adaptive stopping a reliability layer for
shadow-based QFI estimation.
\end{abstract}

\maketitle

\section{Introduction}
\label{sec:introduction}

Quantum Fisher information (QFI) sets the local precision limit in
quantum sensing and metrology~\cite{degen2017quantum,
giovannetti2011advances,taylor2016quantum}.
For a parameterized state $\rho_\theta$, the quantum
Cram\'{e}r--Rao bound lower bounds the mean-squared error of any
locally unbiased estimator by
$1/[N\mathcal{F}(\rho_\theta)]$ after $N$ measurement
repetitions~\cite{liu2020quantum, petz2011introduction,
jiang2014quantum, rath2021quantum, meyer2021fisher}.
Estimating QFI is therefore a basic validation step for
quantum-enhanced sensing protocols.
For scalable metrology, the output must be more than a number: it must
also state whether the evidence supports using that number to compare
protocols or certify a claimed precision advantage.
Exact evaluation, however, becomes exponentially costly as the
Hilbert space grows.
This has motivated randomized-measurement and shadow-based
estimators~\cite{beckey2022variational, vitale2024robust,
huang2020predicting, elben2022toolbox, zhang2025krylovshadow}
that replace exact diagonalization by controlled statistical
estimation.

We study the decision layer that turns a scalable QFI estimate into a
trustworthy success declaration.
An adaptive QFI routine must decide how to update its resources and,
more importantly, when its estimate is accurate enough to carry a
success label.
For Krylov-shadow QFI estimation this stopping decision has two
independent directions.
The Krylov order $K$ controls the resolution of the projected
symmetric-logarithmic-derivative problem, whereas the sample count $M$
controls finite-sample uncertainty at the current order.
More samples can sharpen the estimate around the current
Krylov-truncated target, while the low-$K$ projection bias remains.
Thus an adaptive routine can appear statistically converged while its
reported value is still systematically wrong.

We call this event a \emph{false stop}: the routine declares success,
but post hoc comparison with a reference value shows that the reported
estimate is outside tolerance.
This effect is quantitatively large in our benchmark.
A representative width-based run terminates at $(K,M)=(2,32)$ with
$\varepsilon=0.2$, while its final error is more than ten times larger
than $\varepsilon$.
Across the five-level dephasing grid in
Section~\ref{sec:benchmark-study}, the same rule has false-stop rates
between $0.16$ and $0.68$.
These observations isolate the central point of this work: scalable QFI
estimation needs a stopping rule that can distinguish sampling
convergence from Krylov-resolution convergence.

\subsection*{Main contributions}

\paragraph{A reliability problem for adaptive QFI estimation.}
We formulate false stopping as a distinct correctness problem for
adaptive Krylov-shadow QFI estimation.
The key observation is that the total error separates into a
Krylov-truncation component and a finite-sample component.
The two components are controlled by different resources and can become
inaccurate independently.
This is why a small bootstrap width at fixed $K$ is not a certificate
of total QFI accuracy.

\paragraph{A certifiable two-component contract.}
Theorem~\ref{thm:certified-stop} gives a sufficient condition for a
reliable success declaration when calibrated truncation and sampling
radii together fit inside the user tolerance.
Proposition~\ref{prop:finite-horizon-heldout} gives the adaptive
finite-horizon version: exploration data may nominate candidate stops,
but the success label is issued only after an independent confirmation
batch passes a truncation-plus-statistical radius test with alpha
spending.
This separates the certifiable contract from the empirical same-sample
checks used in the numerical rule.

\paragraph{A component-aware adaptive interface.}
We implement the principle as AKS-QFI (Adaptive Krylov-Shadow QFI), a
decision layer around the fixed-resource Krylov-shadow estimator.
AKS-QFI records separate resolution-side and sampling-side evidence
along the adaptive trajectory and permits a success declaration only
after minimum Krylov resolution, minimum sample count, and
persistence gates have passed.
The interface separates two outputs that are often conflated: a
numerical QFI estimate and the labeled decision that the estimate is
accurate enough to report as a success.

\paragraph{Benchmark evidence.}
On a noisy mixed-state benchmark with five dephasing levels and
50 independent replicates per noise level at $n=4$, the
component-aware empirical rule has no observed false stops, while the
width-based rule has false-stop rates from $0.16$ to $0.68$.
Under the default resource limit the component-aware rule makes no
false success declarations.
A true 5\% relative-tolerance control gives the positive case: after
Krylov resolution and sample counts are recalibrated,
component-aware stopping declares success in 12 of 20 runs with no
observed false stops, and all 20 component-aware terminal estimates are
within tolerance.

Shadow-based protocols for scalable observable estimation originate
with shadow tomography~\cite{aaronson2020shadow} and the
classical-shadow formalism~\cite{huang2020predicting}, and are
surveyed in the randomized-measurement
toolbox~\cite{elben2022toolbox}.
Krylov-subspace shadow methods extend these tools to QFI estimation by
introducing the explicit resolution knob
$K$~\cite{zhang2025krylovshadow}, while resampling-based uncertainty
assessment for shadow estimators has been explored
independently~\cite{ghysels2025shadowbootstrap}.
AKS-QFI adds a reliability layer for adaptive use: it decides whether an
estimate can be reported as accurate while both Krylov resolution and
sampling uncertainty are still changing.

The same distinction between apparent convergence and justified
accuracy appears broadly in finite-shot adaptive quantum algorithms,
but we focus on the QFI stopping problem because the two resource
directions are especially transparent.
Classical sequential analysis~\cite{wald1945sequential} and
anytime-valid inference~\cite{howard2021anytime} provide powerful
tools for stopping under uncertainty, including confidence sequences
valid along adaptive trajectories.
Standard applications most directly control a single martingale-type
error process.
Krylov-shadow QFI estimation instead requires joint calibration of a
truncation component and a sampling component, which shrink at
different rates and are governed by different resources.
Related adaptive stopping ideas have also been studied in quantum
state tomography~\cite{ferrie2014selfguided,granade2012rohl} and
adaptive resource allocation~\cite{lattimore2020bandit,
even2002pac}.
Posterior-width stopping in tomography does not address the same
failure mode when a narrow uncertainty interval coexists with
systematic Krylov-truncation bias.
Best-arm identification supports a decision after enough samples, but
AKS-QFI must also establish that the Krylov resolution has become
adequate, a condition separate from sampling statistics alone.

The remainder of the paper is organized as follows.
Section~\ref{sec:prelim} reviews QFI and the fixed-$(K,M)$
Krylov-shadow estimation setting.
Section~\ref{sec:aks} introduces AKS-QFI and its two-component
indicator.
Section~\ref{sec:stopping} presents the width-only and component-aware
stopping rules, states the main theoretical results, and
formalizes the false-stop pathology.
Section~\ref{sec:benchmark-study} reports benchmark evidence and
reliability plots.
Section~\ref{sec:discussion} discusses broader implications,
recalibration requirements, and next steps.
Appendices collect proofs, algorithmic details, benchmark
configuration, and additional checks.

\section{Preliminaries and Problem Setting}
\label{sec:prelim}

\subsection{Quantum Fisher information}

Let $\{\rho_\theta\}_{\theta\in\Theta}$ be a differentiable
one-parameter family of density operators on a
finite-dimensional Hilbert space $\mathcal{H}$.
The quantum Fisher information (QFI) at parameter value
$\theta$ is defined via the symmetric logarithmic derivative
(SLD) $L_\theta$, which is the self-adjoint operator
satisfying
\begin{equation}
\label{eq:sld}
\partial_\theta \rho_\theta
=
\frac{1}{2}\bigl(\rho_\theta L_\theta + L_\theta \rho_\theta\bigr),
\end{equation}
and
\begin{equation}
\label{eq:qfi-def}
\QFI(\rho_\theta) = \Tr\!\left(\rho_\theta L_\theta^2\right).
\end{equation}
For a mixed state with spectral decomposition
$\rho_\theta = \sum_k \lambda_k \ket{k}\!\bra{k}$,
the QFI admits the explicit formula
\begin{equation}
\label{eq:qfi-spectral}
\QFI(\rho_\theta)
= 2\sum_{\substack{j,k \\ \lambda_j+\lambda_k>0}}
\frac{\bigl|\bra{j}\partial_\theta\rho_\theta\ket{k}\bigr|^2}
     {\lambda_j+\lambda_k}.
\end{equation}
In the pure-state limit $\rho_\theta = \ket{\psi}\!\bra{\psi}$,
Eq.~\eqref{eq:qfi-spectral} reduces to
\begin{equation}
\label{eq:qfi-pure}
\QFI(\rho_\theta)
= 4\Bigl(
    \langle\partial_\theta\psi|\partial_\theta\psi\rangle
    - \bigl|\langle\psi|\partial_\theta\psi\rangle\bigr|^2
  \Bigr).
\end{equation}
Equation~\eqref{eq:qfi-spectral} serves as the reference
formula in our numerical evaluations: benchmark values
$\QFI_{\mathrm{ref}}$ are computed exactly via eigendecomposition
and are used only post hoc for evaluation; they are not
available to the adaptive procedure during a run.

\subsection{Krylov-shadow estimation with two resource knobs}

Reference~\cite{zhang2025krylovshadow} provides the
fixed-resource primitive on which this work builds.
For a Krylov order $K$ and sample count $M$ chosen in advance, the
Krylov-shadow estimator returns a fixed-resource QFI estimate.
The present work does not change this primitive.
It studies the adaptive layer above it: a user specifies an accuracy
target, and the routine must decide how far to increase $K$, how many
randomized measurements to use, and whether the evidence supports a
success label.
The key point is that the two resources affect different targets.
The Krylov order changes the population value approached by the
estimator, whereas the sample count controls finite-sample error
around that fixed-$K$ population value.

At a fixed Krylov order, the estimator replaces the full QFI
calculation by a projected calculation on
\[
\mathcal{K}_K(G,v_0)=
\mathrm{span}\{v_0,Gv_0,\ldots,G^{K-1}v_0\},
\]
where $G$ is the Hermitian generator of the unitary encoding and
$v_0$ is the seed vector.
The density matrix and generator are projected to this subspace;
the projected density matrix is normalized, the mixed-state QFI
formula is evaluated on the projected pair, and the result is
weighted by the retained trace of the projected density matrix.
The resulting population value is denoted by $\QFI_K$.

Randomized measurements enter through a classical-shadow density
estimate,
\begin{equation}
\label{eq:ks-estimator}
\widehat{\rho}_M
= \frac{1}{M}\sum_{m=1}^{M}
  \mathcal{M}^{-1}\!\left(U_m
  |b_m\rangle\!\langle b_m|U_m^\dagger\right),
\end{equation}
where $\mathcal{M}^{-1}$ is the single-shot classical-shadow inverse
channel~\cite{huang2020predicting}, $U_m$ is the random measurement
basis, and $b_m$ is the observed bit string.
The fixed-resource QFI estimate is then
\[
\widehat{\QFI}_{K,M}=\Phi_K(\widehat{\rho}_M,G),
\]
where $\Phi_K$ denotes the projected mixed-state QFI functional.
In the benchmark estimator, $\widehat{\rho}_M$ is first projected
back to the density-matrix cone before $\Phi_K$ is evaluated.
Because $\Phi_K$ is nonlinear, this plug-in estimate need not be
exactly unbiased at finite $M$; any such finite-$M$ effect belongs to
the sampling side of the error decomposition.

This construction exposes the error split used throughout the paper.
The large-sample target at fixed $K$ is $\QFI_K$, while the desired
target is the full QFI $\QFI(\rho_\theta)$.
The total error therefore separates into a resolution-side component
$\QFI_K-\QFI(\rho_\theta)$ and a sampling-side component
$\widehat{\QFI}_{K,M}-\QFI_K$, as formalized in
Appendix~\ref{sec:theory-decomp}.
Increasing $K$ acts on the truncation component; increasing $M$
reduces finite-$M$ sampling error, including random fluctuation and
any plug-in bias at fixed $K$~\cite{elben2022toolbox}.
These operations are not interchangeable.
More samples can sharpen the estimate around the current $\QFI_K$,
but they cannot by themselves remove low-$K$ projection bias.

The adaptive interface therefore needs evidence tied to both error
sources.
We denote by $d_K$ a \emph{Krylov-stability measure}, implemented here
as an inter-order change
$|\widehat{\QFI}_{K,M} - \widehat{\QFI}_{K-1,M}|$, and by $w_M$ the
bootstrap empirical uncertainty interval width at the current
$(K,M)$.
The role of $w_M$ is deliberately narrow: it measures finite-sample
spread at the current Krylov order.
It does not test whether the Krylov subspace has resolved the
QFI-relevant directions, and therefore it does not by itself bound the
total QFI error.
The implementation-level definitions of $(d_K,w_M)$ are given in
Appendix~\ref{app:impl}.

\subsection{Problem statement: adaptive stopping}

Given a target tolerance $\varepsilon > 0$ and an optional risk
parameter $\delta \in (0,1)$, an adaptive QFI routine must return
three pieces of information: a terminal resource pair $(K,M)$, a
terminal estimate $\widehat{\QFI}_{K,M}$, and a labeled stopping
outcome.
The outcome label is part of the contract.
A success declaration should mean that the available evidence supports
the claimed tolerance.
A resource-limit termination should mean the opposite: within the
resource limit, the routine did not obtain enough evidence to make
that claim.

The central difficulty is that the quantities available online are
observable proxies rather than calibrated total-error radii.
The sampling-side width $w_M$ can become small at low $K$, and a local
inter-order change can underestimate unresolved Krylov bias even when
the true truncation bias $B_K$ remains large.
Thus a stopping rule without minimum-resource and persistence gates
can terminate prematurely in noisy mixed-state regimes.
Addressing this failure mode requires two design choices: an indicator
that keeps resolution-side and sampling-side evidence separate, and
a stopping rule that treats a success declaration as a separate
decision rather than as a by-product of a narrow empirical interval.
Section~\ref{sec:aks} gives this interface, and
Section~\ref{sec:stopping} states the component conditions under which
a stopping claim can be trusted.

\section{AKS-QFI: Adaptive Interface and Error Indicator}
\label{sec:aks}

AKS-QFI turns a fixed-$(K,M)$ Krylov-shadow QFI estimator into a
reliability-aware adaptive interface.
Instead of treating the Krylov order $K$ and the sample count $M$
as offline choices, the interface selects them along an adaptive
trajectory under a user tolerance $\varepsilon$.
Its main role is not to modify the estimator formula, but to separate
two outputs that are often conflated: the numerical QFI estimate and
the decision that this estimate is supported at the requested
tolerance.
The fixed-resource Krylov-shadow estimator remains the primitive;
AKS-QFI defines the decision contract around its use.

\subsection{Inputs, outputs, and the stopping interface}

AKS-QFI takes as input a target tolerance $\varepsilon > 0$,
resource limits $K \le K_{\max}$ and $M \le M_{\max}$, and
optional configuration parameters for the stopping protocol.
At termination it returns a point estimate $\widehat{\QFI}$
together with a concise stopping summary.
The summary includes the final $(K, M)$, the empirical interval
width, the Krylov-stability value, and an outcome label: either a
success declaration or resource-limit termination.
This label is not auxiliary metadata.
It is the output that tells a downstream user whether the adaptive run
supports the requested tolerance claim or reports the best value
found within the available resources.

\subsection{Observable indicator and allocation rule}
\label{sec:aks-indicator}

Because the total estimation error has two qualitatively
distinct components, a stopping rule based on a single empirical
width can be misleading.
The finite-$M$ sampling error may be small while the
Krylov-truncation error remains dominant.
AKS-QFI therefore maintains two observable component scores and uses
their maximum only as an allocation severity score,
\begin{equation}
\label{eq:indicator}
\widetilde I_{K,M}
= \max\{\widetilde I^{(\mathrm{trunc})}_{K,M},
        \widetilde I^{(\mathrm{stat})}_{K,M}\}.
\end{equation}
The maximum in Eq.~\eqref{eq:indicator} routes the next resource
update; it is not itself a certificate.
The stopping test remains componentwise and resource-aware.
The sampling side must pass its empirical width gate.
The Krylov side can pass by local stability, when the inter-order
change is below tolerance.
Under the finite-resource convention used in the numerical rule,
reaching the prescribed Krylov limit $K_{\max}$ also passes the
empirical Krylov-side gate; this means only that the run has used all
allowed Krylov resolution, not that a calibrated truncation radius has
been proved.
Theorem~\ref{thm:certified-stop} gives the corresponding
assumption-explicit statement only when calibrated component bounds,
not raw observable scores, satisfy the half-tolerance conditions.
The two observable quantities are defined as follows.

The \emph{truncation-side term} is derived from the
Krylov-stability measure $d_K$ introduced in
Section~\ref{sec:prelim}:
\begin{equation}
\label{eq:I-trunc}
\widetilde I^{(\mathrm{trunc})}_{K,M}
\coloneqq d_K
= \bigl|\widehat{\QFI}_{K,M} - \widehat{\QFI}_{K-1,M}\bigr|,
\end{equation}
which measures local Krylov stability through the inter-order
change in the point estimate.
This comparison is evaluated at the same sample count $M$ so that
changes in the score primarily reflect the Krylov order rather
than a changed shot count.
The auxiliary $(K-1,M)$ evaluation does not use bootstrap resampling.

The \emph{sampling-side observable quantity} is derived from a
bootstrap empirical uncertainty interval constructed from the
$M$ shadow samples.
Concretely, let $[\widehat{L}_{K,M}, \widehat{U}_{K,M}]$
denote the equal-tailed bootstrap interval at nominal level
$1-\alpha$; then
\begin{equation}
\label{eq:I-stat}
\widetilde I^{(\mathrm{stat})}_{K,M}
\coloneqq W^{(\mathrm{stat})}_{K,M}
\coloneqq w_M
= \widehat{U}_{K,M} - \widehat{L}_{K,M}.
\end{equation}
As established in Section~\ref{sec:prelim}, $w_M$ is treated
as a finite-sample width at fixed $K$.
It is not a bound on the total QFI error unless the Krylov truncation
term is also controlled and the interval coverage is justified under
explicit assumptions.
The calibrated sampling term in Theorem~\ref{thm:certified-stop}
is an error radius; $w_M$ is the full bootstrap interval width used by
the empirical stopping rule.
Thus the gate $w_M \le \varepsilon$ is a tolerance-scale check on the
empirical interval width; the theorem uses a calibrated error radius.
The tildes mark observable decision quantities and separate them from
the calibrated component bounds used in
Theorem~\ref{thm:certified-stop}.
The theoretical status of this interval is analyzed in
Section~\ref{sec:stopping}.

\paragraph{Allocation rule.}
The allocation rule gives priority to the resource associated with the
unresolved component.
At each iteration, AKS-QFI first checks whether the Krylov-stability
measure still exceeds tolerance.
If so, it increments the Krylov order
($K \leftarrow K+1$).
If the Krylov side has passed but the sampling interval width remains
above tolerance, it doubles the sample count
($M \leftarrow 2M$).
Geometric growth in $M$ matches the expected $1/\sqrt{M}$ scaling of
the statistical term and keeps the number of adaptive iterations
small.
Complete pseudocode, including tie-breaking and final-pass
details, is given in Algorithm~\ref{alg:aks-qfi} of
Appendix~\ref{app:impl}.
The rule is intended as an interpretable finite-resource controller, not
as an optimal allocation algorithm.

\subsection{Termination summary}

The terminal summary is the object evaluated in the benchmark.
It records the final estimate, final resource pair, empirical interval
width, local Krylov-stability value, and stopping outcome.
Post hoc access to $\QFI_{\mathrm{ref}}$ then separates three cases
that would otherwise look similar from the estimate alone: an accurate
success declaration, a false success declaration, and a run that
reaches the resource limit without a success label.

\subsection{Measurement-count accounting}
\label{sec:aks-complexity}

Each invocation of the Krylov-shadow estimator at order $K$
and sample count $M$ uses $M$ randomized measurements.
Within a single AKS-QFI run, $K$ increases by at most
$K_{\max} - K_0$ steps and $M$ doubles at most
$L \coloneqq \lceil\log_2(M_{\max}/M_0)\rceil$ times.
Thus the number of estimator evaluations satisfies
\[
N_{\mathrm{eval}}
\le (K_{\max}-K_0)+L+1.
\]
An implementation that discards previous measurements and
redraws at every evaluation would therefore obey
\begin{equation}
\label{eq:measurement-count-bound}
N_{\mathrm{meas}}^{\mathrm{fresh}}
\;\le\;
N_{\mathrm{eval}}\,M_{\max}.
\end{equation}
With the defaults ($K_{\max}=8$, $K_0=1$, $M_0=16$,
$M_{\max}=512$), this gives $N_{\mathrm{eval}}\le 14$.
This fresh-batch upper bound treats every intermediate evaluation as a
new batch.
An online implementation can instead acquire measurements lazily and
reuse nested prefixes, in which case the number of shots entering the
terminal estimate is $M_{\mathrm{final}}$.

\textit{Effective sample-count convention.}
The reported summaries use the \emph{terminal sample count}
$M_{\mathrm{final}}$, namely the number of samples entering
the terminal estimate and stopping decision.
This convention measures how early the decision layer would have
stopped under nested online acquisition and is therefore the relevant
quantity for comparing stopping rules in this paper.
Equation~\eqref{eq:measurement-count-bound} gives a fresh-batch upper
bound for a protocol that instead redraws fresh data at every
evaluation.

\section{Stopping Rules and the False-Stop Failure Mode}
\label{sec:stopping}

This section turns the outcome label returned by AKS-QFI into a
formal reliability question.
A success declaration is a statistical claim about the terminal
estimate, and in the Krylov-shadow setting that claim has two
components.
It must be supported by resolution-side evidence and by sampling-side
evidence; a narrow empirical interval addresses only the second of
these.
We first describe the width-based failure mode, then state the
component condition that would justify a success declaration, and
finally introduce the component-aware empirical rule used in the benchmark.

\subsection{Width-only stopping and the low-\texorpdfstring{$K$}{K}
failure mode}

A natural baseline terminates when the observable severity score from
Section~\ref{sec:aks-indicator} falls below the user tolerance,
\begin{equation}
\label{eq:width-only-stop}
\widetilde I_{K,M} \le \varepsilon.
\end{equation}
This rule is simple and instance-adaptive, but it can conflate local
numerical stability with total accuracy.
A narrow empirical interval indicates internal consistency of the
observed samples, and a small adjacent-order change indicates local
stability between two nearby Krylov orders.
Neither statement establishes that the fixed-$K$ population value is
close to the true QFI\@.
In noisy mixed-state regimes, the estimator can therefore stabilize
around a biased low-$K$ value while both observable scores appear
small, leading to a premature success declaration.

The corresponding reliable-stopping condition is componentwise.
In the theorem, $I^{(\mathrm{trunc})}_{K}$ and
$I^{(\mathrm{stat})}_{K,M}$ denote calibrated upper bounds
on the truncation and sampling errors, respectively.
These bounds are stronger objects than the observable scores
$(d_K,w_M)$: they are error radii with stated assumptions and risk
level.
The theorem says that a success declaration is justified when both
components are controlled at half-tolerance scale.
The empirical false-stop analysis below then asks what happens when a
practical rule substitutes observable scores for those calibrated
radii.

\begin{theorem}[Sufficient two-component stopping condition]
\label{thm:certified-stop}
Under Assumptions~\ref{ass:krylov-conv}
and~\ref{ass:subgauss} (stated in
Appendix~\ref{app:theory}), let $(K,M)$ be a fixed resource
pair, or an adaptively selected terminal pair for which the
sampling-side bound is valid at the terminal time.
For any $\varepsilon > 0$ and $\delta \in (0,1)$, if
\begin{align}
\label{eq:suff-trunc-main}
I^{(\mathrm{trunc})}_{K} &\le \frac{\varepsilon}{2}, \\
\label{eq:suff-stat-main}
I^{(\mathrm{stat})}_{K, M} &\le \frac{\varepsilon}{2},
\end{align}
then the corresponding success claim is reliable at level
$1-\delta$:
\begin{equation}
\label{eq:certified-main}
\Pr\!\left[
  \bigl|\widehat{\QFI}_{K,M} - \QFI(\rho_\theta)\bigr|
  > \varepsilon
\right] \le \delta.
\end{equation}
\end{theorem}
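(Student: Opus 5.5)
The plan is a triangle-inequality decomposition followed by a single probabilistic step. First, write the total error through the fixed-$K$ population value $\QFI_K$ introduced in Section~\ref{sec:prelim}:
\[
\bigl|\widehat{\QFI}_{K,M}-\QFI(\rho_\theta)\bigr|
\le
\bigl|\QFI_K-\QFI(\rho_\theta)\bigr|
+\bigl|\widehat{\QFI}_{K,M}-\QFI_K\bigr| .
\]
The first term is the deterministic truncation bias $B_K$. The second is the random sampling error at fixed $K$, including any finite-$M$ plug-in bias of the nonlinear functional $\Phi_K$, as formalized in Appendix~\ref{sec:theory-decomp}.

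For the truncation term, I would invoke Assumption~\ref{ass:krylov-conv}. I read it as saying that the calibrated quantity $I^{(\mathrm{trunc})}_K$ is a deterministic upper bound on $B_K$, that is, $|\QFI_K-\QFI(\rho_\theta)|\le I^{(\mathrm{trunc})}_K$. Condition~\eqref{eq:suff-trunc-main} then gives $B_K\le\varepsilon/2$ surely, with no probability spent.

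For the sampling term, I would invoke Assumption~\ref{ass:subgauss}. I read it as saying that the sampling error is sub-Gaussian with a variance proxy of order $\sigma_K^2/M$, or at least concentrates in that way. The calibrated radius should then be defined so that
\[
\Pr\bigl[|\widehat{\QFI}_{K,M}-\QFI_K|>I^{(\mathrm{stat})}_{K,M}\bigr]\le\delta ,
\]
for example with $I^{(\mathrm{stat})}_{K,M}=\sigma_K\sqrt{2\log(2/\delta)/M}$ plus a bound on the plug-in bias. On the complementary event, condition~\eqref{eq:suff-stat-main} gives a sampling error of at most $\varepsilon/2$.

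Combining the two, on an event of probability at least $1-\delta$ the total error is at most $\varepsilon/2+\varepsilon/2=\varepsilon$. This yields~\eqref{eq:certified-main}. Because the truncation bound is deterministic, no union bound is needed and the full budget $\delta$ goes to the sampling side.

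The main obstacle is the adaptive case, where $(K,M)$ is selected by data. A fixed-pair concentration bound does not automatically hold at a data-dependent stopping time. The statement sidesteps this by hypothesis: it assumes a sampling-side bound that is valid at the terminal time. I would make this explicit by conditioning on the event $\mathcal{E}=\{|\widehat{\QFI}_{K,M}-\QFI_K|\le I^{(\mathrm{stat})}_{K,M}\}$ for the terminal pair, with $\Pr[\mathcal{E}]\ge 1-\delta$. Such a guarantee could come from a union bound over the at most $N_{\mathrm{eval}}$ candidate pairs, or from the held-out batch of Proposition~\ref{prop:finite-horizon-heldout}.

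A second subtlety is that the radius $I^{(\mathrm{stat})}_{K,M}$ may itself be random, since it depends on an estimate of $\sigma_K$. I would require that the assumption control the event jointly, with the radius already inside the probability, so that checking~\eqref{eq:suff-stat-main} after seeing the data remains legitimate.
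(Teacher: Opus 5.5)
Your proposal is correct and is essentially the paper's proof. The paper applies Proposition~\ref{prop:error-bound}, which combines the triangle inequality on the decomposition~\eqref{eq:decomp}, the deterministic bound $|B_K|\le C_{\mathrm{trunc}}\mu^K$, and a sub-Gaussian tail for $S_{K,M}$ at radius $\beta_{K,M}+\sigma_K\sqrt{2\log(2/\delta)/M}$. It then observes that the two half-tolerance conditions put $\{|\widehat{\QFI}_{K,M}-\QFI|>\varepsilon\}$ inside the controlled event, and it handles the adaptive case through the same terminal-validity hypothesis you invoke.

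One caution on your union-bound aside. If samples are reused as nested prefixes, the bound must range over every reachable pair in the $K\times M$ grid, not just the at most $N_{\mathrm{eval}}$ pairs a run visits. The count $N_{\mathrm{eval}}$ suffices only when each evaluation draws a fresh batch, which is the conditioning argument of Proposition~\ref{prop:finite-horizon-heldout}.
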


\begin{proof}
See Appendix~\ref{sec:theory-stopping}.
\end{proof}

Theorem~\ref{thm:certified-stop} is a component condition at the
terminal resource pair.
For an adaptive interface, a second issue appears: the terminal pair is
chosen from the data.
The held-out construction below separates candidate selection from the
final success declaration.

\begin{proposition}[Finite-horizon held-out certificate]
\label{prop:finite-horizon-heldout}
Consider an adaptive run with at most $J$ eligible success attempts.
Before attempt $j$, let $\mathcal{H}_j$ denote the exploration
history used to choose the candidate pair $(K_j,M_j)$.
At that attempt, draw a confirmation batch independent of
$\mathcal{H}_j$ and form the reported estimate
$\widehat{\QFI}^{\,\mathrm{conf}}_j$.
Assume a pre-calibrated Krylov truncation radius
$R^{(\mathrm{trunc})}_{K_j}$ and a confirmation-batch statistical
radius $R^{(\mathrm{stat})}_{j}(\delta_j)$ such that, conditional on
the exploration history,
\begin{multline}
\Pr\!\Bigl[
  \bigl|\widehat{\QFI}^{\,\mathrm{conf}}_j-\QFI\bigr|
  >
  R^{(\mathrm{trunc})}_{K_j}
  + R^{(\mathrm{stat})}_{j}(\delta_j)
\\
  \,\Bigm|\, \mathcal{H}_j
\Bigr]\le \delta_j .
\end{multline}
If the routine declares success at attempt $j$ only when
\begin{equation}
\label{eq:heldout-cert-main}
R^{(\mathrm{trunc})}_{K_j}
  + R^{(\mathrm{stat})}_{j}(\delta_j)
\le \varepsilon,
\end{equation}
and if $\sum_{j=1}^{J}\delta_j\le\delta$, then
\begin{equation}
\begin{aligned}
\Pr\!\bigl[
\exists j\le J:\;&
  \text{success declared at }j
\\[-1pt]
  &\text{and }
  \left|\widehat{\QFI}^{\,\mathrm{conf}}_j-\QFI\right|>\varepsilon
\bigr]\le\delta.
\end{aligned}
\end{equation}
\end{proposition}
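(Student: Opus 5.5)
The plan is a union bound over the $J$ attempts, combined with the conditional coverage hypothesis applied attempt by attempt. For each $j\le J$ define the event
\[
E_j=\{\text{success declared at } j\}\cap\bigl\{|\widehat{\QFI}^{\,\mathrm{conf}}_j-\QFI|>\varepsilon\bigr\}.
\]
The target event is $\bigcup_{j\le J}E_j$, so it suffices to show $\Pr[E_j]\le\delta_j$ for every $j$ and then sum, using $\sum_j\delta_j\le\delta$.

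\textbf{Step 1: reduce $E_j$ to a miscoverage event.} A success declaration at $j$ requires the test \eqref{eq:heldout-cert-main}, so on $E_j$ we have $R^{(\mathrm{trunc})}_{K_j}+R^{(\mathrm{stat})}_j(\delta_j)\le\varepsilon$. Together with $|\widehat{\QFI}^{\,\mathrm{conf}}_j-\QFI|>\varepsilon$ this forces $|\widehat{\QFI}^{\,\mathrm{conf}}_j-\QFI|>R^{(\mathrm{trunc})}_{K_j}+R^{(\mathrm{stat})}_j(\delta_j)$. Hence
\[
E_j\subseteq D_j\cap C_j,
\]
where $D_j$ is the event that the test passes at $j$ (and attempt $j$ is reached) and $C_j$ is the miscoverage event above. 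This step is where the two components are simply added, as in the triangle-inequality split behind Theorem~\ref{thm:certified-stop}.

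\textbf{Step 2: condition on the exploration history.} Write
\[
\Pr[E_j]\le\mathbb{E}\bigl[\Pr[C_j\mid\mathcal{H}_j]\bigr]\le\delta_j,
\]
using the assumed conditional bound. The inequality requires dropping the indicator of $D_j$ and of reaching attempt $j$, which is harmless because we only need an upper bound.

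This step is the only delicate one. Whether attempt $j$ is reached and which pair $(K_j,M_j)$ is nominated are $\mathcal{H}_j$-measurable, and the conditional coverage statement holds for whatever pair $\mathcal{H}_j$ selects. That is exactly why the confirmation batch must be independent of $\mathcal{H}_j$. The test outcome itself depends on the confirmation data through $R^{(\mathrm{stat})}_j$, but since we bound $E_j$ by $C_j$ alone, no independence between the test and the coverage event is needed.

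I would state explicitly two conventions. First, $\mathcal{H}_j$ is defined (possibly with an "attempt not reached" state) for all $j\le J$. Second, confirmation batches of earlier attempts may be folded into later histories, and the independence assumption covers this.

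\textbf{Step 3: union bound.} Summing gives
\[
\Pr\Bigl[\bigcup_{j\le J}E_j\Bigr]\le\sum_{j=1}^{J}\delta_j\le\delta,
\]
which is the claim. This alpha-spending step needs no assumption about dependence across attempts, which is why the finite horizon $J$ matters.
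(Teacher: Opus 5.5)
Your proof is correct and follows essentially the same route as the paper. Both arguments contain each per-attempt failure event in the conditional miscoverage event on the success branch, take the expectation over $\mathcal{H}_j$ to get $\Pr(E_j)\le\delta_j$, and finish with a union bound using $\sum_j\delta_j\le\delta$. Your extra bookkeeping about unreached attempts and about folding earlier confirmation batches into later histories makes explicit what the paper leaves implicit.
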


\begin{proof}
Let $A_j$ be the event that attempt $j$ declares success and the
confirmed estimate has error larger than $\varepsilon$.
On the success event,
Eq.~\eqref{eq:heldout-cert-main} holds, so $A_j$ is contained in
the conditional failure event controlled at level $\delta_j$.
Conditioning on $\mathcal{H}_j$ therefore gives
$\Pr(A_j\mid\mathcal{H}_j)\le\delta_j$.
Taking expectations gives $\Pr(A_j)\le\delta_j$.
A union bound over the at most $J$ attempts gives the result.
\end{proof}

In the held-out certificate interface, the adaptive
phase produces only a candidate stop.
The returned point estimate for a successful run is then recomputed on
the independent confirmation batch.
The statistical radius is the larger half-width of the held-out
bootstrap interval calibrated at the conditional level $1-\delta_j$,
\[
R^{(\mathrm{stat})}_{j}(\delta_j)
= \max\{
|\widehat{\QFI}^{\,\mathrm{conf}}_j-\widehat L_j|,
|\widehat U_j-\widehat{\QFI}^{\,\mathrm{conf}}_j|
\},
\]
with Bonferroni spending $\delta_j=\delta/J$ in the finite-horizon
experiments.
The symmetric radius contains the bootstrap interval, so calibrated
conditional coverage for $[\widehat L_j,\widehat U_j]$ implies the
radius condition assumed in
Proposition~\ref{prop:finite-horizon-heldout}.
The truncation radius is read from a pre-registered Krylov calibration
table for the same family, generator, seed strategy, noise model, and
system size.
This distinction fixes the language used in the rest of the paper:
same-sample empirical gates are reported as observed reliability
under calibrated resources, whereas finite-horizon certificate refers
to the held-out interface with calibrated component radii.

Theorem~\ref{thm:certified-stop} and
Proposition~\ref{prop:finite-horizon-heldout} also clarify the role of
the empirical rule.
The observable quantities $(d_K,w_M)$ measure local Krylov stability
and sampling width; calibrated total-error control requires calibrated
component radii.
A single false stop is compatible with the allowed
$\delta$-probability failure event of a valid certificate.
Systematic false stops, or false stops produced by same-sample checks
that are not calibrated radii, mean that the empirical tests did not
establish those component conditions.

\subsection{Post hoc definition of a false stop}

When a reference value $\QFI_{\mathrm{ref}}$ is available for
benchmarking, the post hoc absolute error is
\begin{equation}
\label{eq:post-hoc-error}
E_{K,M} \coloneqq
\bigl|\widehat{\QFI}_{K,M} - \QFI_{\mathrm{ref}}\bigr|.
\end{equation}
A run exhibits a \emph{false stop} if it declares success while
\begin{equation}
\label{eq:false-stop-def}
E_{K,M} > \varepsilon.
\end{equation}
This definition is purely evaluative.
The reference value is unavailable to the adaptive routine and is used
only after the run to assess stopping reliability.

Theorem~\ref{thm:certified-stop} provides a componentwise
interpretation for false stops.

\begin{remark}[Componentwise interpretation of a false stop]
\label{rem:false-stop-nec}
Under calibrated component bounds satisfying the assumptions of
Theorem~\ref{thm:certified-stop}, if both half-tolerance
conditions hold at the terminal pair, a false success declaration can
occur only on the theorem's $\delta$-probability failure event.
Consequently, a false success outside that allowed event, or a
systematic false-success pattern in a benchmark, must be traced to at
least one component not being controlled at the claimed tolerance
scale, i.e.,
\begin{equation}
\label{eq:false-stop-nec}
I^{(\mathrm{trunc})}_{K} > \frac{\varepsilon}{2}
\quad \text{or} \quad
I^{(\mathrm{stat})}_{K,M} > \frac{\varepsilon}{2}.
\end{equation}
\end{remark}

In the false-stop instances documented in
Section~\ref{sec:benchmark-study}, the empirical stopping tests
pass at low resources, while post hoc comparison with the
reference value and higher-$K$ trajectories shows that the
truncation component remains the limiting source of error.

\subsection{False-stop behavior in noisy mixed-state regimes}

The noisy mixed-state family used in the benchmark is chosen to expose
this mechanism.
The characteristic pattern is that the empirical tests pass at low
resources even though increasing $K$ would still move the estimate
toward the reference value, consistent with
Remark~\ref{rem:false-stop-nec}.
A representative termination instance is provided in
Appendix~\ref{sec:failure_example}, and aggregate
false-stop rates across noise levels are reported in
Section~\ref{sec:benchmark-study}.

\subsection{Component-aware stopping: eligibility and persistence}

Component-aware stopping is designed to suppress low-resource false stops by
augmenting~\eqref{eq:width-only-stop} with an eligibility gate
and a persistence requirement.
The eligibility gate requires
\begin{equation}
\label{eq:stop-thresholds}
K \ge K_{\min}^{\mathrm{stop}},
\qquad
M \ge M_{\min}^{\mathrm{stop}},
\end{equation}
before a success declaration can be considered.
The persistence requirement then asks the resource-aware component gates
to pass for $P$ consecutive eligible iterations.
Equivalently, the empirical severity
$\widetilde I_{K,M}$ must remain below $\varepsilon$ for $P$ eligible
rounds, with the finite-resource convention that reaching $K_{\max}$
passes the Krylov-side gate.
That convention is not a calibrated half-tolerance condition of
Theorem~\ref{thm:certified-stop}; it records that no further Krylov
resolution is available under the resource limit.
The component-aware rule therefore acts as a decision filter around the
observable scores; distribution-free accuracy requires calibrated
component radii.

The thresholds in~\eqref{eq:stop-thresholds} are tied to the
two-component error model.
Appendix~\ref{sec:theory-stopping-params} derives closed-form
expressions for $K_{\min}^{\mathrm{stop}}(\varepsilon)$ and
$M_{\min}^{\mathrm{stop}}(\varepsilon, \delta)$ from the
error decomposition of Theorem~\ref{thm:certified-stop},
and Proposition~\ref{prop:patience} gives an idealized
independence model in which the false-stop probability at an
eligible step decreases geometrically with $P$.
In the benchmark, these expressions motivate the default
configuration
\begin{equation}
\label{eq:stop-defaults}
K_{\min}^{\mathrm{stop}} = 4,
\qquad
M_{\min}^{\mathrm{stop}} = 128,
\qquad
P = 2,
\end{equation}
for the tested noise family.
The sensitivity of stopping reliability to these parameters is studied
in Section~\ref{sec:threshold-ablation}.
Like AKS-QFI itself, component-aware stopping acts only on the decision layer;
the underlying estimator calls are unchanged.
Its reliability and cost tradeoffs are evaluated using the
metrics defined in Section~\ref{sec:benchmark-study}.

\section{Benchmark Study}
\label{sec:benchmark-study}

We evaluate the stopping layer of AKS-QFI on controlled benchmarks
that isolate the decision problem from the underlying estimator.
All comparisons keep the Krylov-shadow estimator fixed and change only
the stopping rule.
The chapter is organized around three operational questions.
First, can a width-only empirical rule declare success before the
two error components are controlled?
Second, does the component-aware rule suppress such false success
declarations under the same fixed resource limit?
Third, after Krylov resolution and sample count are recalibrated, can
the same component-aware principle produce correct success declarations
in addition to avoiding premature ones?
The evidence follows this order.
Figure~\ref{fig:width-vs-err} identifies the false-stop mechanism;
Figs.~\ref{fig:false-stop-noise} and~\ref{fig:median-err-noise}
show that the mechanism is systematic across the powered $n=4$ grid;
Fig.~\ref{fig:relative-recalibration-width-error} shows the positive
case after recalibration; and
Figs.~\ref{fig:threshold-ablation-pareto} and
\ref{fig:krylov-convergence} explain which thresholds and Krylov
calibration choices set the decision scale.

\subsection{Benchmark design and decision metrics}
\label{sec:exp-setup}

We compare width-only stopping~\eqref{eq:width-only-stop} against
component-aware empirical stopping with the default
configuration~\eqref{eq:stop-defaults} on the
noisy mixed-state family with depolarizing probability
$p_{\mathrm{dep}}=0.03$ and resource limits
$(K_{\max}, M_{\max}) = (8, 512)$.
Dephasing probability varies over
$p_\phi \in \{0, 0.06, 0.12, 0.18, 0.24\}$.
The main-text reliability benchmark is the $n=4$ qubit case.
The same protocol is also run at $n=6$ and $n=8$, but those data are
reported in Appendix~\ref{app:scaling} as calibration-transfer tests
rather than as a scaling study.

The primary $n=4$ benchmark uses a fixed absolute tolerance
$\varepsilon=0.2$.
This choice keeps the stopping rule independent of the post hoc
reference value and is used consistently in the threshold ablation and
representative trajectories.
The reference value $\QFI_{\mathrm{ref}}$ is computed exactly via
Eq.~\eqref{eq:qfi-spectral} only for post hoc evaluation of the
terminal error and false-stop status; it is not used by the stopping
logic.
The $n=4$ results (50 independent replicates per noise level)
constitute the primary benchmark and provide the statistical basis for
the false-success-suppression results reported in
Section~\ref{sec:exp-aggregate}.

The implementation also includes two sequential statistical baselines
and one held-out confirmation rule.
The \texttt{seq\_heldout\_width} baseline uses the width-only
candidate stop but requires an independent held-out bootstrap
confirmation with alpha spending; it does not use any Krylov
truncation radius.
The \texttt{fixedK\_heldout} baseline locks $K$ to a calibrated
fixed value and then applies the same held-out sampling-radius test.
The held-out component-aware rule combines the component-aware
K-stability candidate stop with the held-out finite-horizon certificate of
Proposition~\ref{prop:finite-horizon-heldout}.
Its success condition is
$R^{(\mathrm{trunc})}_{K}+R^{(\mathrm{stat})}_{\mathrm{conf}}
\le\varepsilon$, and its reported estimate is the held-out
confirmation estimate.
These rules define the certifiable interface studied theoretically in
Proposition~\ref{prop:finite-horizon-heldout}.
The numerical results below therefore separate the empirical
component-aware rule from lightweight recalibration controls; a powered
larger-system held-out study is the next calibration step.

The $n \in \{6,8\}$ calibration-transfer tests use a relative
tolerance
$\varepsilon_{\mathrm{rel}}=0.05$, i.e.,
$\varepsilon=\varepsilon_{\mathrm{rel}}\QFI_{\mathrm{ref}}$, to
normalize the evaluation scale across system sizes.
These runs test transfer of the default $n=4$ calibration, not a
scaling law.
With $K_{\max}=8$, truncation bias remains unresolved at
$n \ge 6$, and the default stopping parameters require recalibration
using the closed-form threshold expressions of
Appendix~\ref{sec:theory-stopping-params}.
The Wilson CIs at 10 independent replicates are wide
(e.g., $[0.72, 1.00]$ for FSR$=1.00$), so the larger-system data
identify the required recalibration rather than a full scaling curve.

For $n=4$, each $(p_\phi,\text{rule})$ combination is evaluated over
50 independent replicates.
For the larger-system calibration-transfer tests, each setting is evaluated over
10 independent replicates.
For every replicate we retain the stopping outcome, final $(K,M)$,
effective terminal sample count, estimate $\widehat{\QFI}$, and post
hoc absolute error
$E = |\widehat{\QFI} - \QFI_{\mathrm{ref}}|$.

\paragraph{Decision metrics.}
\label{sec:exp-metrics}

The evaluation metrics are defined on the stopping decision rather
than on the estimator alone.
All three are computed post hoc using $\QFI_{\mathrm{ref}}$.

The \emph{false-stop rate} (FSR) measures how often the
routine declares success while the post hoc
error exceeds tolerance:
\begin{equation}
\label{eq:fsr}
\mathrm{FSR}
=
\frac{\#\{\text{success declaration and } E > \varepsilon\}}
     {\#\{\text{all runs}\}}.
\end{equation}
The \emph{stop rate} (SR) measures how often the routine
declares success:
\begin{equation}
\label{eq:sr}
\mathrm{SR}
=
\frac{\#\{\text{success declaration}\}}
     {\#\{\text{all runs}\}}.
\end{equation}
SR is reported alongside FSR because a zero FSR can have two different
meanings: either the routine made accurate success declarations, or it
made no success declarations and reached the resource limit.
The \emph{stop precision} (SP) measures the conditional
accuracy of success declarations:
\begin{equation}
\label{eq:sp}
\mathrm{SP}
=
\frac{\#\{\text{success declaration and } E \le \varepsilon\}}
     {\#\{\text{success declaration}\}},
\end{equation}
defined when the denominator is nonzero.
Thus FSR is the marginal false-success declaration rate, SR is the
success-declaration rate, and SP is the conditional precision of
success declarations.
When the SP denominator is nonzero, they satisfy the identity
$\mathrm{FSR} = \mathrm{SR}\cdot(1-\mathrm{SP})$, so a
zero observed FSR achieved with a nonzero SR provides evidence
of both safety and usefulness, whereas a zero FSR with zero SR
indicates a zero-success-declaration regime.
Ninety-five percent Wilson confidence intervals are reported
for FSR and SR throughout.

\subsection{False stops arise when sampling convergence precedes
Krylov convergence}
\label{sec:exp-core}

The first result establishes the mechanism that the stopping rule must
control: a narrow sampling interval at the terminal resource pair need
not imply an accurate QFI estimate.
Figure~\ref{fig:width-vs-err} plots the termination-time empirical
interval width $w_M$ against the post hoc absolute error $E$ for all
$n=4$ runs.
Under width-only stopping, many runs fall in the region
$w_M\le\varepsilon$ but $E>\varepsilon$.
These are false stops in the sense of Eq.~\eqref{eq:false-stop-def}:
the empirical sampling-width gate passes, yet the terminal estimate is
outside tolerance.
The post hoc reference value and higher-$K$ trajectories show that the
missing component is Krylov resolution, consistent with
Remark~\ref{rem:false-stop-nec}.
The component-aware rule removes this pattern in the tested grid by
preventing success declarations before the minimum resolution,
sampling, and persistence gates have passed.
This figure is therefore the bridge between the theory of
Section~\ref{sec:stopping} and the aggregate reliability curves below:
it shows why stopping needs a resolution-side condition in addition to
a sampling-width condition.

\begin{figure}[H]
    \centering
    \includegraphics[width=\columnwidth]{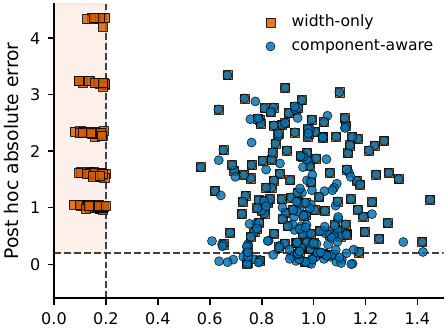}
    \caption{
    \textbf{Narrow empirical intervals do not establish accuracy under
    width-only stopping.}
    Termination-time empirical interval width $w_M$ versus
    post hoc absolute error $E$ for all runs
    ($n=4$, five dephasing levels, 50 independent replicates per noise level).
    Width-only runs form a cluster in
    the upper-left region ($w_M \le \varepsilon$,
    $E > \varepsilon$): these are false stops in the sense of
    Eq.~\eqref{eq:false-stop-def}.
    The empirical gate can pass at the terminal low-$K$ point, but
    higher-$K$ trajectories and the reference value reveal the
    unresolved truncation bias identified in
    Remark~\ref{rem:false-stop-nec}.
    Component-aware stopping removes this upper-left cluster by
    adding eligibility and persistence gates motivated by the
    two-component condition of Theorem~\ref{thm:certified-stop}.
    Dashed lines mark the stopping threshold
    $w_M=\varepsilon$ and the post hoc success threshold
    $E=\varepsilon$.
    }
    \label{fig:width-vs-err}
\end{figure}

\subsection{Aggregate reliability reveals the safety--cost tradeoff
(\texorpdfstring{$n=4$}{n=4})}
\label{sec:exp-aggregate}

The second result shows that the false-stop mechanism is systematic,
not a single representative trajectory.
Figure~\ref{fig:false-stop-noise} reports FSR and SR versus
$p_\phi$ with Wilson confidence intervals, and
Fig.~\ref{fig:median-err-noise} reports the corresponding terminal
absolute errors.
The larger-system runs are not pooled into these curves; they are
reported separately in Appendix~\ref{app:scaling} because they test
calibration transfer rather than the primary $n=4$ claim.

Width-only stopping produces frequent false success declarations.
Its FSR rises from $0.16$ (95\% CI $[0.08, 0.29]$) at
$p_\phi=0$ to $0.68$ (95\% CI $[0.54, 0.79]$) at
$p_\phi=0.24$.
Moreover, every width-only success declaration in the tested grid is
incorrect post hoc
($\mathrm{SP}=0.00$ at all noise levels).
Component-aware stopping has $\mathrm{FSR}=0$ at every tested point,
but this zero must be interpreted together with SR.
Under the default resource limit, all component-aware runs reach the
resource limit without declaring success
($\mathrm{SR}=0.00$, 95\% CI $[0.00,0.07]$ for each noise level).
Thus the powered $n=4$ benchmark supports a focused claim: the component-aware
rule suppresses false success declarations in this fixed-resource
configuration by avoiding premature success labels.

The same aggregate data also show the cost of this reliability gain.
Component-aware runs have median effective terminal sample count $512$ at
every tested dephasing level.
At $p_\phi=0.18$ and $0.24$, width-only runs terminate with median
effective sample counts of $64$ and $32$ shots, respectively, so the
component-aware rule uses up to $16\times$ more terminal samples.
The extra samples change the terminal estimates: median absolute error
falls from $1.553$ to $0.393$ at $p_\phi=0.18$, and from $1.006$ to
$0.211$ at $p_\phi=0.24$.
The apparent efficiency of width-only stopping therefore reflects
early termination at a biased value, not reliable convergence.

\begin{figure}[H]
    \centering
    \includegraphics[width=\columnwidth]{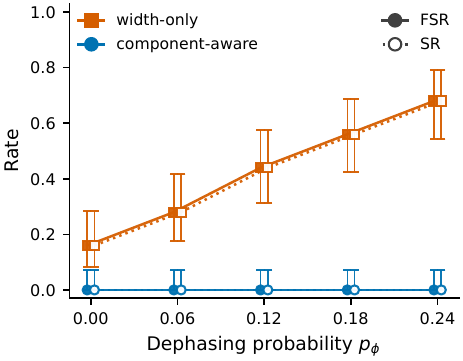}
    \caption{
    \textbf{Component-aware stopping suppresses false success
    declarations under the default resource limit.}
    False-stop rate (FSR, filled markers) and stop rate (SR,
    open markers) versus dephasing probability $p_\phi$
    ($n=4$, 50 independent replicates per point), with 95\% Wilson confidence
    intervals.
    Width-only FSR increases with $p_\phi$ and reaches $0.68$
    at $p_\phi=0.24$, while component-aware FSR remains at zero
    throughout.
    Component-aware SR is also zero at every tested noise level, so
    this fixed-resource grid measures false-success suppression;
    positive success declarations are tested in the
    recalibrated sample-count control of
    Section~\ref{sec:relative-recalibration}.
    }
    \label{fig:false-stop-noise}
\end{figure}

\begin{figure}[H]
    \centering
    \includegraphics[width=\columnwidth]{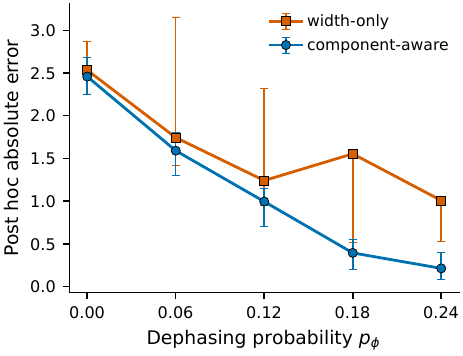}
    \caption{
    \textbf{Width-only stopping's low effective cost reflects premature
    termination at a biased value.}
    Median post hoc absolute error versus $p_\phi$
    ($n=4$, 50 independent replicates per point).
    The two rules diverge at
    $p_\phi \ge 0.18$: component-aware stopping reduces median error
    by up to a factor of $4.8$ relative to width-only stopping despite
    using up to $16\times$ larger effective terminal sample counts at these noise
    levels, indicating that width-only stopping's apparent
    efficiency arises from early termination at a
    truncation-biased estimate, the mechanism of
    Remark~\ref{rem:false-stop-nec}.
    }
    \label{fig:median-err-noise}
\end{figure}

\subsection{Recalibration converts suppression into reliable success}
\label{sec:relative-recalibration}

The fixed-resource benchmark tests the component-aware rule's ability
to avoid premature success labels.
The complementary question is whether the same rule can declare success
once the two resource directions have been calibrated to the target.
We test this in a recalibrated positive control at the low-dephasing point
$p_\phi=0.03$, where
$\QFI_{\mathrm{ref}}=3.7696$ and a true 5\% relative target gives
$\varepsilon=0.1885$.
Exact-density Krylov calibration motivates using full resolution:
at the tested $n=4$ points, $K=12$ still leaves relative truncation
errors from $2.24\%$ to $16.80\%$, whereas $K=16$ removes this
component for the $n=4$ Hilbert space.
We therefore fix $K=16$ and run the sample-count sequence
$M\in\{32768,65536,131072,262144\}$.
The component-aware schedule requires a sampling-width pass to persist
for two consecutive sample-count levels, except at the predeclared terminal
resource limit where a final pass is allowed to declare success.

This recalibrated control produces component-aware success declarations
without observed false stops.
Over 20 independent replicates, the component-aware schedule declares
success in 12 runs
($\mathrm{SR}=0.60$, 95\% CI $[0.39,0.78]$), and every declaration is
correct within the 5\% relative tolerance
($\mathrm{FSR}=0.00$, 95\% CI $[0.00,0.16]$; $\mathrm{SP}=1.00$).
All 20 component-aware terminal estimates are within tolerance; the
eight runs without a success label occur because the empirical bootstrap width remains
above the stopping threshold at the terminal resource limit.
The matched width-only baseline is more willing to stop but less
reliable: it declares success in 19 of 20 runs and makes 15 false
success declarations
($\mathrm{FSR}=0.75$, 95\% CI $[0.53,0.89]$).
The component-aware median relative error is $0.012$ (IQR
$0.009$--$0.017$), while the median final bootstrap width is
$0.175$ (IQR $0.145$--$0.200$), close to
$\varepsilon=0.1885$.

Figure~\ref{fig:relative-recalibration-width-error} shows the decision
geometry behind this control.
All component-aware terminal points lie below the true-error threshold,
while runs without a success label are precisely the cases where the bootstrap width
remains above threshold.
The contrast with the default grid is the central message: zero false
success under insufficient resources is only the safety half of the
story, whereas calibrated Krylov resolution and calibrated sample count
turn the same decision rule into a useful success-declaration
procedure.

\begin{figure}[H]
\includegraphics[width=\columnwidth]{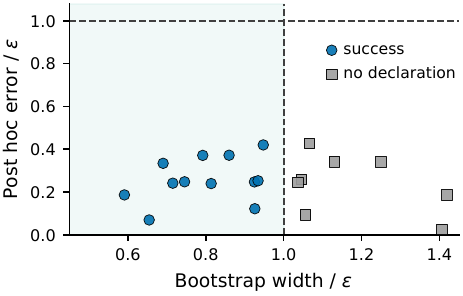}
\caption{
\textbf{Component-aware calibration geometry at true 5\% relative tolerance.}
Each point is a component-aware run from the sample-count sequence,
normalized by
$\varepsilon=0.1885$.
All terminal estimates lie below the true-error threshold
(horizontal dashed line).
The component-aware rule declares success only when the bootstrap width
is also below threshold (vertical dashed line); otherwise it
makes no success declaration at the terminal resource limit.
}
\label{fig:relative-recalibration-width-error}
\end{figure}

\paragraph{High-dephasing calibration frontier.}
\label{sec:calibration-frontier}

The low-dephasing recalibration above uses a stringent 5\% relative
target.
The high-dephasing frontier at $p_\phi=0.24$, where the main
benchmark has the largest width-only FSR, shows the same transition
under a weaker absolute tolerance.
Changing the tolerance and increasing the resource limits moves the
component-aware rule from returning no success label to correct post
hoc success declarations.
Because $\mathcal{F}_{\mathrm{ref}}=1.0561$ at this point, the
absolute tolerances below should not be read as 5\% relative-accuracy
claims.

At $\varepsilon=0.55$, component-aware stopping makes no success
declaration over 20 independent replicates.
At $\varepsilon=0.80$, using $K=12$, $M_{\max}=2048$, and $P=1$, it
declares success in 12 of 20 runs with no observed false stops.
A separate 50-replicate run at $\varepsilon=0.70$ gives 17
component-aware success declarations with no observed false stops,
whereas the matched
width-only baseline declares success in every run and has
FSR$=0.96$.
This frontier reinforces the calibration message: component-aware success
declarations require both Krylov and sampling resources to be matched
to the tolerance, while width-only declarations remain unreliable in
this high-dephasing regime.

\paragraph{Threshold sensitivity.}
\label{sec:threshold-ablation}

The threshold ablation checks whether the eligibility and persistence
gates are doing visible decision work.
We sweep
$K_{\min}^{\mathrm{stop}} \in \{2,4,6\}$,
$M_{\min}^{\mathrm{stop}} \in \{32,128,256\}$, and
$P \in \{1,2,3\}$ at fixed noise
($p_\phi=0.12$, $p_{\mathrm{dep}}=0.03$), $n=4$,
$\varepsilon=0.2$, and resource limits $(K_{\max},M_{\max})=(8,512)$.
Figure~\ref{fig:threshold-ablation-pareto} displays the resulting
false-stop rate as a threshold grid.
Because most configurations in this fixed-resource ablation reach
the resource limit, their median terminal sample count is identical.
The informative variation is therefore the residual false-stop rate
among configurations that are permissive enough to declare success.

The weakest configuration,
$(K_{\min}^{\mathrm{stop}},M_{\min}^{\mathrm{stop}},P)=(2,32,1)$,
exhibits residual false stops.
This is the expected failure mode of an under-calibrated stopping rule:
without enough Krylov resolution, sample count, or persistence, the
empirical tests can pass before the truncation component is controlled.
Most other configurations reach the resource limit, so their zero FSR
is a fixed-resource outcome rather than evidence that all thresholds
are equally useful.
The default $(4,128,2)$ has zero observed false stops and is the
configuration motivated by the threshold calculations in
Appendix~\ref{sec:theory-stopping-params}.
Appendix~\ref{app:threshold-ablation-table} explains why this fixed-resource
ablation is summarized as a threshold grid rather than as a long
enumeration of nearly identical rows.

\begin{figure}[H]
\centering
\includegraphics[width=\columnwidth]{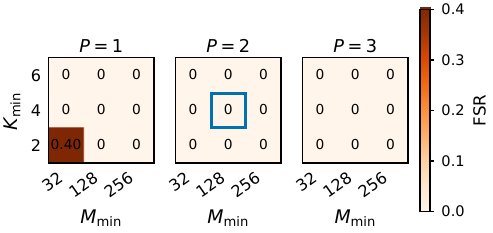}
\caption{
\textbf{Threshold sensitivity under a fixed-resource ablation.}
Each cell is one stopping configuration
$(K_{\min}^{\mathrm{stop}}, M_{\min}^{\mathrm{stop}}, P)$
aggregated over replicates at fixed noise
($p_\phi=0.12$, $p_{\mathrm{dep}}=0.03$),
$\varepsilon=0.2$, resource limits $(K_{\max},M_{\max})=(8,512)$.
Color and cell text give the false-stop rate.
The only configuration with nonzero FSR is the weakest configuration,
$(2,32,1)$.
The default $(4,128,2)$ is outlined in blue and has zero observed
false stops in this ablation.
}
\label{fig:threshold-ablation-pareto}
\end{figure}

\subsection{Krylov convergence sets the calibration scale}
\label{sec:exp-krylov}

The final result explains why the default calibration works in the
powered $n=4$ benchmark but does not automatically transfer to larger
systems.
It checks whether the local Krylov-stability quantity $d_K$ tracks the
truncation side of the error.
Assumption~\ref{ass:krylov-conv} postulates that the truncation bias
$|B_K|$ decays exponentially in $K$.
Figure~\ref{fig:krylov-convergence} compares $d_K$ with the post hoc
truncation bias $|B_K|$ for $n \in \{4,6,8\}$ at $p_\phi=0.12$.

\begin{figure*}[t]
\centering
\includegraphics[width=\textwidth]{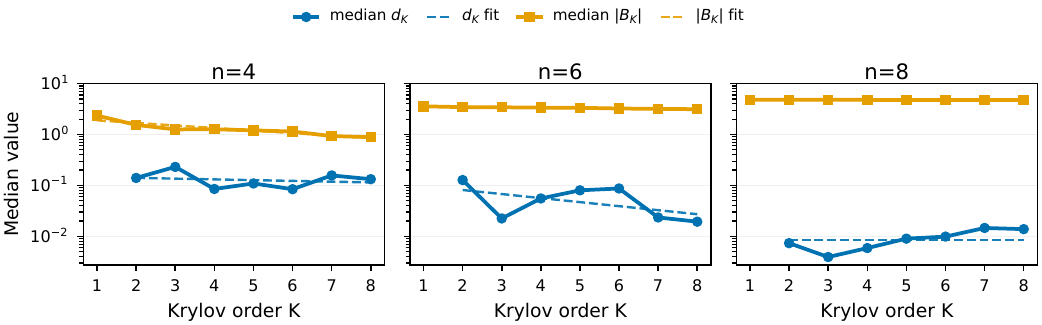}
\caption{
\textbf{Empirical Krylov convergence and observable stability.}
For each system size, markers show median inter-order change
$d_K$ and post hoc truncation bias $|B_K|$ versus Krylov order at
$p_\phi=0.12$; dashed curves show fitted exponential trends.
For $n=4$, both quantities decay with overlapping fitted rates.
For $n\in\{6,8\}$, the bias is nearly flat within
$K\le K_{\max}=8$, explaining why the default stopping rule requires
larger Krylov cutoffs at scale.
The separation between $d_K$ and $|B_K|$ shows that local stability
can be much smaller than the unresolved absolute bias.
}
\label{fig:krylov-convergence}
\end{figure*}

For $n=4$, the bias decays within $K=1,\ldots,8$ with fitted decay
rate $\hat{\mu}(B_K)=0.89$ (95\% CI $[0.86,0.93]$).
The observable inter-order change gives a consistent but noisier rate,
$\hat{\mu}(d_K)=0.97$ (95\% CI $[0.84,1.11]$), supporting its use as a
local Krylov-convergence score in the primary benchmark.
The wider interval for $d_K$ reflects the limited number of full
$K=1,\ldots,8$ trajectories.

For $n=6$ and $n=8$, the same comparison shows where the default
calibration must be strengthened.
The fitted truncation-bias rates are $\hat{\mu}(B_K)=0.98$ and
$1.00$, respectively, so the bias is nearly flat within $K\le8$.
This explains the larger-system calibration-transfer failures in
Appendix~\ref{app:scaling}: when the truncation bias remains large at
$K_{\max}$, a stopping rule calibrated for $n=4$ cannot make a reliable
tolerance claim.
These data motivate larger Krylov cutoffs and recalibration at
$n\ge6$, rather than a scaling conclusion from the default settings.

\paragraph{Summary of the benchmark study.}
Taken together, the benchmark results support a single decision-layer
story.
Width-only stopping fails because it treats sampling convergence as if
it were total-error convergence.
The component-aware rule removes this false-success mode by requiring
resolution-side, sampling-side, and persistence evidence before a
success label is issued.
Under the default $n=4$ resource limit this produces termination
without false success declarations; after recalibrating
$K$ and $M$, it produces correct success declarations at true 5\%
relative tolerance.
The threshold ablation and Krylov-convergence evidence explain the
remaining requirement: reliable success requires thresholds and Krylov
cutoffs matched to the target regime, not a universal fixed setting.

\section{Discussion}
\label{sec:discussion}

\subsection{What this work establishes}

This work separates a QFI estimate from the decision to call that
estimate accurate.
For adaptive Krylov-shadow QFI estimation, this distinction is
essential because the two resources in the estimator control different
objects.
The sample count controls finite-sample spread at the current Krylov
order, while the Krylov order controls whether the fixed-$K$ target is
close to the full QFI\@.
The false stops documented here occur when the sampling-side evidence
looks favorable before the Krylov component has been resolved.

The theoretical contribution is therefore a stopping contract, not a
new fixed-resource QFI formula.
Theorem~\ref{thm:certified-stop} states the component condition: a
success declaration is justified when calibrated truncation and
sampling radii are both controlled at the tolerance scale.
Proposition~\ref{prop:finite-horizon-heldout} gives the corresponding
adaptive finite-horizon interface by separating exploratory candidate
selection from an independent held-out confirmation batch.
This distinction also fixes the language of the paper.
The same-sample component-aware rule provides observed reliability
under the tested calibrated resources; the word certificate is reserved
for the held-out rule with calibrated component radii.

The numerical evidence shows how this contract behaves in the benchmark.
In the powered $n=4$ fixed-resource grid, width-only stopping makes
frequent false success declarations, with FSR between $0.16$ and
$0.68$.
The component-aware empirical rule makes no success declaration under
that default resource limit and therefore has no observed false success
declarations.
This is not a weakness of the interface: in a regime where the
evidence does not support the tolerance claim, termination without a
success label is the intended output.
The recalibrated 5\% relative-tolerance control supplies the
positive-control case.
After increasing Krylov resolution and sample count, the
component-aware rule declares success in 12 of 20 runs with no observed
false stops,
whereas the matched width-only baseline declares more often but makes
15 false success declarations.

The reliability gain has an explicit resource cost.
Component-aware runs use larger terminal sample counts in the main
benchmark, and in some regimes they end without a success label.
This tradeoff is part of the design rather than an implementation
accident.
The interface is meant to distinguish three outcomes that a point
estimate alone cannot: reliable success, false success, and
termination at the resource limit without a success label.
The classical overhead is modest in the present implementation: each
bootstrap evaluation uses $B$ resamples, giving $O(BM)$ classical
post-processing at sample count $M$.
The sample-count summaries count randomized-measurement shots, not
bootstrap resamples.

\subsection{Separation of concerns in adaptive estimation}

The estimator and the stopping rule should be treated as different
modules.
The fixed-resource estimator maps a chosen pair $(K,M)$ to a QFI
estimate.
The stopping rule decides whether that estimate should be returned
with a success label.
The false stops in this paper arise from the second module: the
estimate can still move with larger $K$, but the decision layer has
already accepted a low-$K$ value.

This separation makes the interface reusable.
A stricter decision rule can be placed around the same fixed-resource
estimator, and the terminal summary identifies the binding obstruction:
Krylov resolution, sampling width, or the resource limit.
The threshold expressions for
$K_{\min}^{\mathrm{stop}}$ and
$M_{\min}^{\mathrm{stop}}$ in Appendix~\ref{sec:theory-stopping-params} should
therefore be read as recalibration formulas for new regimes; their
numerical values depend on the target family and tolerance.
The larger-system calibration-transfer tests illustrate this point directly:
when $K_{\max}=8$ is too small for the truncation bias to decay, the
$n=4$ stopping parameters do not transfer automatically.

\subsection{Open directions}

The main practical requirement is larger-system calibration.
At $n=6$ and $n=8$, the default $K_{\max}=8$ cutoff leaves the
truncation bias nearly flat over the tested range.
Stronger larger-system claims will require larger Krylov cutoffs,
better seed or generator choices, or problem-specific calibration of
the truncation radius.
The present paper uses these runs to identify the next calibration
target for larger systems.

The main statistical requirement is sequential calibration.
The held-out rule already handles data-dependent candidate selection
through independent confirmation data and finite-horizon alpha
spending.
Its remaining external input is a calibrated Krylov truncation radius.
When such component radii are available,
Theorem~\ref{thm:certified-stop} does not require the reference value
$\QFI_{\mathrm{ref}}$ during operation; the reference value is used
only for post hoc benchmark labeling.
A natural next step is to replace or tighten the calibration table by
time-uniform component radii for the observable checks $(d_K,w_M)$.
Confidence sequences~\cite{howard2021anytime} provide one route
because their probability bounds hold uniformly along adaptive
trajectories.

\section{Conclusion}
\label{sec:conclusion}

Adaptive QFI estimation needs a reliable success label as well as an
efficient numerical estimator.
This paper identifies a concrete failure of that label in
Krylov-shadow QFI estimation: empirical sampling width can become small
around a biased low-order Krylov estimate.
AKS-QFI addresses the failure at the decision layer by requiring
resolution-side and sampling-side evidence before reporting success.

The main benchmark identifies the reliability gain targeted by AKS-QFI.
For $n=4$ noisy mixed states, the width-only rule makes frequent false
success declarations, while the component-aware empirical rule returns
no success label under the default resource limit and has no observed false
success declarations.
After recalibrating Krylov resolution and sample count to a true
5\% relative target, the same component-aware logic produces correct
success declarations in a recalibrated positive control.
Together, these results show that the success label becomes meaningful
when Krylov resolution and sampling precision are calibrated together.

The held-out construction gives the route from observed reliability to
a finite-horizon certificate.
When calibrated component radii and an independent confirmation batch
are available, the final success label can be attached to the
confirmation estimate rather than to the exploratory estimate.
The remaining challenge is calibration at scale: larger systems require
larger or better-tuned Krylov cutoffs, and a fully powered held-out
large-system study requires independent confirmation batches and
precomputed truncation radii for each problem family.
Thus the present contribution is a certifiable stopping interface and
empirical evidence that this interface is needed for scalable QFI
workflows.

\section*{Data Availability}

The code, processed benchmark data, plotting scripts, and figure
reproduction package needed to reproduce the numerical results in this
article are available in the project repository accompanying this
manuscript.
A permanent archival version of the repository will be deposited upon
publication.
No quantum-hardware data are required to reproduce the reported
benchmarks.

\appendix

\section{Algorithmic Details}
\label{app:impl}

This appendix records the decision logic needed to reproduce the
stopping outcomes reported in the main text.
Algorithm~\ref{alg:aks-qfi} should be read as the adaptive interface
around a fixed-resource Krylov-shadow estimator.
The algorithm records both the returned estimate and the stopping
label, because the benchmark evaluates the reliability of that label.
The benchmark configuration is specified separately in
Appendix~\ref{app:bench}.

\refstepcounter{algorithm}
\label{alg:aks-qfi}
\smallskip
\noindent\textbf{Algorithm~\thealgorithm. AKS-QFI: Adaptive Krylov-Shadow QFI
Estimation}
\par\smallskip
\noindent\rule{\columnwidth}{0.4pt}
\par\smallskip
{\footnotesize
\begin{algorithmic}[1]
\Require tolerance $\varepsilon>0$; resource limits $K_{\max}$,
         $M_{\max}$; initial order $K_0$, initial sample count
         $M_0$; stopping parameters
         $K_{\min}^{\mathrm{stop}}$,
         $M_{\min}^{\mathrm{stop}}$, patience $P$
\Ensure point estimate $\widehat{\QFI}$; stopping summary
\State $K \leftarrow K_0$,\; $M \leftarrow M_0$,\;
       $p \leftarrow 0$ \Comment{patience counter}
\Repeat
  \State Compute $\widehat{\QFI}_{K,M}$ via the
         Krylov-shadow pipeline~\cite{zhang2025krylovshadow}
  \If{$K>1$}
    \State Compute $\widehat{\QFI}_{K-1,M}$ without bootstrap
    \State $\widetilde I^{(\mathrm{trunc})}_{K,M} \leftarrow
           |\widehat{\QFI}_{K,M} - \widehat{\QFI}_{K-1,M}|$
           \hfill (Eq.~\eqref{eq:I-trunc})
  \Else
    \State Set $\widetilde I^{(\mathrm{trunc})}_{K,M}\leftarrow \infty$
           \Comment{force one Krylov increase}
  \EndIf
  \State Set $\widetilde I^{(\mathrm{stat})}_{K,M}\leftarrow
         W^{(\mathrm{stat})}_{K,M}=w_M$ via
         Eq.~\eqref{eq:I-stat}
  \State $\widetilde I_{K,M} \leftarrow
         \max\{\widetilde I^{(\mathrm{trunc})}_{K,M},
         \widetilde I^{(\mathrm{stat})}_{K,M}\}$
  \If{$K \ge K_{\min}^{\mathrm{stop}}$ \textbf{and}
      $M \ge M_{\min}^{\mathrm{stop}}$
      \textbf{and}
      $\bigl(\widetilde I^{(\mathrm{trunc})}_{K,M}\le\varepsilon$
      \textbf{or} $K=K_{\max}\bigr)$
      \textbf{and}
      $\widetilde I^{(\mathrm{stat})}_{K,M}\le\varepsilon$}
    \State $p \leftarrow p + 1$
  \Else
    \State $p \leftarrow 0$
  \EndIf
  \If{$p \ge P$}
    \If{held-out certificate mode is enabled}
      \State Draw an independent confirmation batch at the candidate
             pair $(K,M_{\mathrm{conf}})$
      \State Compute
             $R_{\mathrm{tot}}
             \leftarrow R^{(\mathrm{trunc})}_{K}
             + R^{(\mathrm{stat})}_{\mathrm{conf}}(\delta_j)$
      \If{$R_{\mathrm{tot}}\le\varepsilon$}
        \State \textbf{return} the held-out confirmation estimate
               with a held-out certified success declaration
      \Else
        \State $p \leftarrow 0$ and continue increasing $K$ or $M$
      \EndIf
    \Else
      \State \textbf{return} $\widehat{\QFI}_{K,M}$ with
             a success declaration
    \EndIf
  \EndIf
  \If{$K \ge K_{\max}$ \textbf{and} $M \ge M_{\max}$}
    \State \textbf{return} $\widehat{\QFI}_{K,M}$ with
           a resource-limit termination
  \EndIf
  \If{$\widetilde I^{(\mathrm{trunc})}_{K,M} > \varepsilon$
      \textbf{and} $K < K_{\max}$}
    \State $K \leftarrow K + 1$
  \ElsIf{$\widetilde I^{(\mathrm{stat})}_{K,M} > \varepsilon$
      \textbf{and} $M < M_{\max}$}
    \State $M \leftarrow 2M$
  \ElsIf{$K < K_{\max}$}
    \State $K \leftarrow K + 1$
      \Comment{final Krylov pass before the resource limit}
  \ElsIf{$M < M_{\max}$}
    \State $M \leftarrow 2M$
      \Comment{final sampling pass before the resource limit}
  \Else
    \State \textbf{return} $\widehat{\QFI}_{K,M}$ with
           a resource-limit termination
  \EndIf
\Until{a return condition is met}
\end{algorithmic}
}
\par\smallskip
\noindent\rule{\columnwidth}{0.4pt}
\par\smallskip

\subsection{Benchmark summaries}
The run records separate trajectory-level data from aggregate
reliability summaries.
The main text reports the aggregate quantities tied to the stopping
claim: stop rate, false-stop rate, stop precision, terminal sample count,
and terminal error.
Individual trajectories are used only to illustrate representative
failure modes and resource-allocation behavior.

Each terminal run is classified as either a success declaration or a
resource-limit termination.
For component-aware runs, the record also stores whether the
minimum-$K$ gate, minimum-$M$ gate, component gates, and persistence
requirement passed.
These fields explain the decision path of a run; they are not
additional scientific claims beyond the stopping outcome itself.

\subsection{Seed vector selection}
\label{app:seed-vector}

The Krylov subspace $\mathcal{K}_K(G,v_0)$ is constructed by applying
the Hermitian generator $G$ iteratively to an initial seed vector
$v_0$.
In the reported benchmark, $v_0$ is the normalized dominant
eigenvector of the noisy density matrix.
This deterministic choice removes seed-selection randomness from the
primary stopping study.
It is not, however, guaranteed to align with the most QFI-sensitive
directions.
Poor alignment can increase the effective truncation constant
$C_{\mathrm{trunc}}$ and delay Krylov convergence, as discussed in
Appendix~\ref{sec:theory-decomp}.
The $n=4$ benchmark shows acceptable decay for this seed choice, while
the $n=6,8$ calibration-transfer tests show that larger systems may require a
different seed strategy or larger Krylov cutoffs.
A systematic seed-vector study is therefore a future calibration task
rather than a claim of the present benchmark.

\subsection{Reproducibility package}
\label{app:code-availability}

The repository contains the estimator implementation, benchmark
configurations, processed run summaries, plotting scripts, and figure
replot package used for the reported statistics and figures.
Each numerical figure is generated from processed CSV summaries or
stored run artifacts.
No quantum-hardware data are required to reproduce the reported
benchmarks.

\section{Benchmark Configuration}
\label{app:bench}

This appendix specifies the benchmark family and resource settings used
for the stopping-reliability study.
The primary grid is the $n=4$ noisy mixed-state benchmark reported in
the main text.
The dephasing probability is varied over
\[
p_\phi \in \{0,\,0.06,\,0.12,\,0.18,\,0.24\},
\]
while the depolarizing probability is fixed at $p_{\mathrm{dep}}=0.03$.
For $n=4$, we use 50 independent replicates for each
$(p_\phi,\,\text{rule})$ combination, giving
$50 \times 5 = 250$ terminal outcomes per rule and 500 terminal
outcomes in total
for the $n=4$ benchmark grid.
The larger-system runs at $n \in \{6,8\}$ use 10 independent
replicates per setting and are reported separately as calibration
transfer tests in Appendix~\ref{app:scaling}.

The main reliability study uses the fixed absolute tolerance
$\varepsilon=0.2$ described in Section~\ref{sec:exp-setup}.
The larger-system calibration-transfer tests in
Appendix~\ref{app:scaling} use the relative tolerance
$\varepsilon_{\mathrm{rel}}=0.05$ to normalize the evaluation scale
across $n$.
The main reliability study compares two stopping policies built on the
same estimator pipeline: a width-only empirical rule and a
component-aware rule with eligibility and persistence constraints.
For the component-aware configuration, the default constraints are
\[
K_{\min}^{\mathrm{stop}}=4,\qquad
M_{\min}^{\mathrm{stop}}=128,\qquad
P=2.
\]
These defaults are calibrated to the primary $n=4$ benchmark; larger
systems require recalibration.

\paragraph{Noisy mixed-state family.}
The benchmark family is constructed as follows.
We first prepare an entangled pure state from
\begin{align*}
|\psi_\alpha\rangle
&= \exp(-i\alpha H_{\mathrm{ent}})|+\rangle^{\otimes n}, \\
H_{\mathrm{ent}}
&= \sum_{j=1}^{n-1} Z_j Z_{j+1}
  + 0.35\sum_{j=1}^{n} X_j,
\end{align*}
with $\alpha=0.25$ in the reported benchmark.
The noisy base state is
\begin{equation}
\label{eq:noisy-mixed}
\rho
= \mathcal{E}_{\mathrm{dep}}(p_{\mathrm{dep}})
  \circ \mathcal{E}_{\mathrm{deph}}(p_\phi)
  \bigl(|\psi_\alpha\rangle\!\langle\psi_\alpha|\bigr),
\end{equation}
where $\mathcal{E}_{\mathrm{deph}}(p_\phi)$ applies independent
single-qubit dephasing channels
$\rho \mapsto (1-p_\phi)\rho + p_\phi Z_j \rho Z_j$
to each qubit $j$, and
$\mathcal{E}_{\mathrm{dep}}(p_{\mathrm{dep}})$ applies a global
depolarizing channel
$\rho \mapsto (1-p_{\mathrm{dep}})\rho
  + p_{\mathrm{dep}} I/2^n$.
The parameter is then encoded by the unitary family
\begin{align*}
\rho_\theta &= e^{-iG\theta}\rho\,e^{iG\theta},
\\
G &= \frac{1}{2}\sum_{j=1}^{n} Z_j
  + 0.08\sum_{j=1}^{n-1} X_j X_{j+1}.
\end{align*}
The QFI is evaluated at $\theta=0$; the reference value
$\QFI_{\mathrm{ref}}$ is computed via the exact spectral
decomposition formula~\eqref{eq:qfi-spectral} applied to the
analytically tractable $2^n \times 2^n$ density matrix.

The family isolates stopping failures in a setting where an exact
reference remains available for evaluation.
It is a controlled benchmark rather than a survey of problem families,
hardware noise models, or hardware-scaled regimes.

\section{Status of Assumption-Explicit Guarantees}
\label{app:theory-status}

This appendix clarifies the relation between the assumption-calibrated
statements in the theory and the empirical checks used in the
reported stopping rule.

Theorem~\ref{thm:certified-stop} gives a sufficient condition for
reliable stopping when calibrated component bounds control the
truncation and sampling errors at the terminal resource pair.
The empirical rule uses the inter-order change $d_K$ and bootstrap
width $w_M$ as observable scores tied to those components.
The bootstrap width captures finite-sample spread at the current
$K$, while the inter-order change is a local measure of Krylov
stability.
Neither score alone is a distribution-free bound on total QFI
error.
This distinction is central to the paper: width-only stopping fails
because the sampling width can be small around a biased low-$K$
estimate while the truncation-side error remains large.

The remaining theoretical step toward a fully calibrated online
routine is to replace these terminal empirical checks with
time-uniform, sequentially valid component radii along the adaptive
trajectory.
The present numerical study instead evaluates the decision layer
post hoc, using $\QFI_{\mathrm{ref}}$ only to label terminal stopping
decisions as accurate or false in the benchmark.

\section{Threshold-Ablation Summary}
\label{app:threshold-ablation-table}

The threshold-ablation sweep covers all 27 settings of
$(K_{\min}^{\mathrm{stop}},M_{\min}^{\mathrm{stop}},P)$ in the
grid shown in Fig.~\ref{fig:threshold-ablation-pareto}.
We report a threshold-grid summary rather than a long table because
the fixed-resource design makes most settings lead to the same
qualitative outcome: resource-limit termination without a success
declaration.

\paragraph{Why 26 of 27 configurations show FSR\,$=\!0.00$.}
The ablation is evaluated at a single fixed noise point
($p_\phi=0.12$, $p_{\mathrm{dep}}=0.03$) under the fixed
resource limit $(K_{\max},M_{\max})=(8,512)$.
Under this resource limit, the dominant outcome across almost
all configurations is resource-limit termination rather
than a success declaration: the truncation bias at this noise
level is large enough that most runs exhaust the resource limit
before the indicator falls below $\varepsilon=0.2$.
When no run in a configuration declares success,
the FSR numerator is identically zero by definition
(Eq.~\eqref{eq:fsr}), regardless of the stopping settings.
This makes the grid nearly uniform in FSR and limits its
discriminative power as an ablation of all stopping parameters.
The single exception, $(K_{\min},M_{\min},P)=(2,32,1)$ with
FSR$=0.40$, occurs because very low thresholds allow early
success declarations before the resource limit is engaged.
A more discriminative ablation would use a reduced resource limit,
for example $M_{\max}=128$, to force more runs through the
success-declaration branch.

\section{Additional Benchmark Figures}
\label{app:supplementary-checks}

This appendix collects supplementary evidence for the $n=4$
benchmark.
Unless stated otherwise, the fixed-noise examples use
$p_\phi=0.12$ and $p_{\mathrm{dep}}=0.03$.
The figures do not introduce new claims beyond the main-text
reliability results.
They show the same decision-layer behavior from three additional
perspectives: matched trajectories, effective terminal sample count, and
stop precision.

\paragraph{Adaptive trajectory.}
Figure~\ref{fig:trace-example} illustrates how the false stop appears
along a single matched trajectory.
It shows the stepwise evolution of $(K_j, M_j)$, the point estimate
$\widehat{\QFI}_{K_j,M_j}$, and the composite indicator
$\widetilde I_{K_j,M_j}$ for a representative width-only run and its
matched component-aware run.
Under width-only stopping, the trajectory terminates at
$(K,M)=(2,32)$ in iteration~3: the bootstrap width has
fallen below $\varepsilon$ at low Krylov order, so the
indicator satisfies the stopping criterion before any
truncation-bias reduction has occurred.
The estimate at termination ($\widehat{\QFI}\approx 0.029$)
is far below the reference ($\QFI_{\mathrm{ref}}\approx 2.35$),
illustrating the false-stop mechanism of
Remark~\ref{rem:false-stop-nec}.
Under component-aware stopping, the same trajectory is extended.
The component-aware rule's $K_{\min}=4$ and $M_{\min}=128$ thresholds
prevent a success declaration at low resources, and the run terminates at
$(K,M)=(8,512)$ by reaching the resource limit with a substantially
smaller error.

\begin{figure}[H]
\centering
\includegraphics[width=\columnwidth]{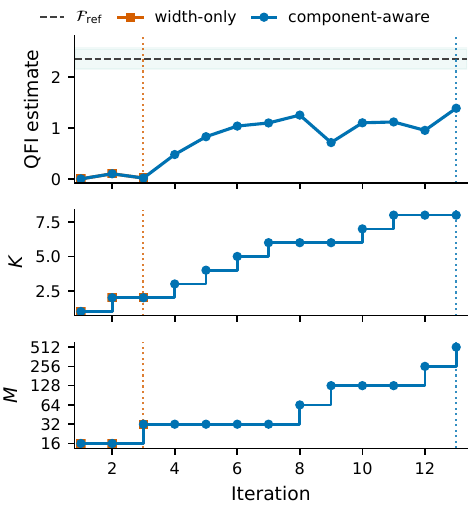}
\caption{
\textbf{Representative adaptive trajectories under width-only and
component-aware stopping.}
Adaptive trajectories for matched width-only and component-aware runs at the
same conditions
($p_\phi=0.12$, $p_{\mathrm{dep}}=0.03$, matched draw;
$\QFI_{\mathrm{ref}}=2.353$, $\varepsilon=0.2$).
This is the same representative instance discussed in
Appendix~\ref{sec:failure_example}.
Top: QFI estimate, with the shaded band marking
$|\widehat{\QFI}-\QFI_{\mathrm{ref}}|\le\varepsilon$.
Middle and bottom: Krylov order $K$ and sample count $M$.
Width-only stopping halts at iteration~3 with $(K,M)=(2,32)$
(a false success declaration), while component-aware stopping
continues to $(K,M)=(8,512)$ and reduces the terminal error.
Dotted vertical lines mark the two stopping iterations.
}
\label{fig:trace-example}
\end{figure}

\paragraph{Effective sample-count distribution at $p_\phi=0.12$.}
The sample-count summaries separate apparent efficiency from reliable
stopping.
At $p_\phi=0.12$, width-only stopping has median effective terminal
sample count $512$ with IQR $[32,512]$.
This wide interval reflects two populations: early success
declarations, many of which are false stops concentrated at
$M\le64$, and resource-limit terminations at $M=512$.
Component-aware stopping has median effective terminal sample count $512$
with IQR $[512,512]$, because all 50 component-aware runs reach the
resource limit.
Thus the component-aware distribution records termination without a
success label under the default resource limit, whereas the low-sample
width-only tail records premature stopping.

\begin{figure}[H]
    \centering
    \includegraphics[width=\columnwidth]{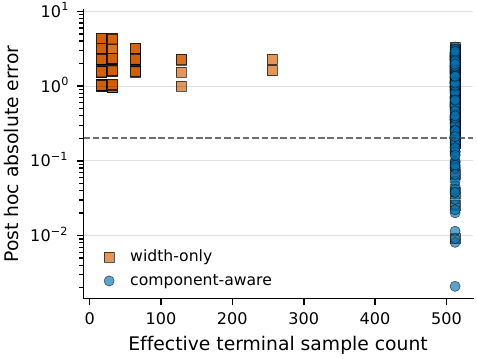}
    \caption{
\textbf{Early termination under width-only stopping can appear
efficient while remaining inaccurate.}
Post hoc absolute error versus effective terminal sample count for the
benchmark (logarithmic vertical axis).
Width-only runs that terminate at low effective sample counts frequently
exhibit large post hoc errors, whereas component-aware runs concentrate near
the resource limit and avoid false success declarations on the tested
benchmark grid.
The plot reports the reliability--cost tradeoff induced by the two
stopping rules.
}
    \label{fig:err-vs-calls}
\end{figure}

\begin{figure}[H]
    \centering
    \includegraphics[width=\columnwidth]{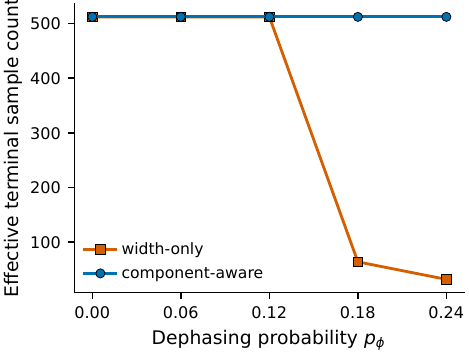}
    \caption{
\textbf{Component-aware stopping uses more terminal samples under the
current resource limits.}
Median effective terminal sample count versus dephasing probability $p$ in
the $n=4$ benchmark (50 independent replicates per point).
In the tested configuration, component-aware stopping consistently reaches
the resource limit (median effective sample count 512), while width-only
stopping often terminates much earlier, especially at higher
dephasing.
Together with Figs.~\ref{fig:false-stop-noise}
and~\ref{fig:median-err-noise}, this indicates that much of the
apparent cost saving of width-only stopping arises from premature
termination.
}
    \label{fig:median-calls-noise}
\end{figure}

\begin{figure}[H]
    \centering
    \includegraphics[width=\columnwidth]{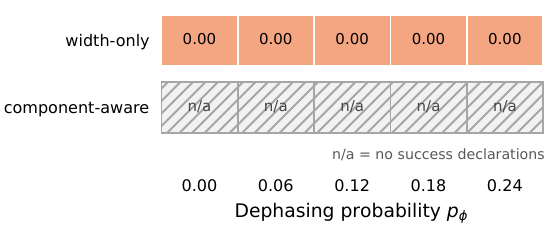}
    \caption{
\textbf{Stop-precision summary for success declarations.}
Each cell reports the post hoc stop precision for a rule and
dephasing probability in the $n=4$ benchmark
(50 independent replicates per point),
defined as the fraction of success declarations satisfying
$E\le \eps$.
Width-only declarations have precision zero at all tested dephasing
levels.
Component-aware entries are marked n/a because component-aware runs
make no success declarations under the current resource limit and
default stopping rule.
The corresponding FSR and SR panels report this resource-limit
outcome.
}
    \label{fig:precision-noise}
\end{figure}

\section{Representative false-stop instance}
\label{sec:failure_example}

This appendix gives a concrete low-resource termination instance that
illustrates the false-stop pathology discussed in
Section~\ref{sec:stopping}.
It is a worked example rather than an additional statistical
claim: the aggregate reliability statements are those reported in
Section~\ref{sec:exp-aggregate}.
The instance shows how a narrow empirical uncertainty interval can
coexist with a large post hoc error when the run is under-resolved in
Krylov order.
All quantities involving $\QFI_{\mathrm{ref}}$ are used only for
\emph{post hoc} evaluation and are unavailable in deployment.

\subsection{Setup and stopping outcome}
We consider the noisy mixed-state benchmark family described in
Appendix~\ref{app:bench} with target tolerance $\eps=0.2$.
Under the width-only stopping rule
(Section~\ref{sec:stopping}), a representative run terminates at
the small resource pair $(K,M)=(2,32)$ by declaring success.
At termination, the empirical uncertainty interval width is
approximately $0.153$, below the tolerance.
However, the post hoc absolute error relative to the benchmark
reference is approximately $2.323$, far above $\eps$.
This mismatch is not a near-threshold fluctuation.
It is the decision-layer failure formalized in
Section~\ref{sec:stopping}: the stopping claim is triggered at low
resources before Krylov truncation has been adequately controlled.

\subsection{Interpretation}
This instance highlights the core mechanism behind false stops in the
reliability study.
At low $(K,M)$, the empirical interval can shrink and the
adjacent-order change can appear locally stable before the Krylov
resolution is sufficient.
In noisy mixed-state regimes, the point estimate may therefore remain
substantially biased while still appearing ``precise'' according to
the empirical interval width and the local stability measure.
Component-aware stopping (Section~\ref{sec:stopping}) suppresses this
mode by requiring eligibility thresholds and persistence before
permitting a success claim.
The example therefore isolates the decision-layer issue addressed by
the component-aware rule, without asserting that the default stopping
parameters transfer unchanged to every problem family or system size.

\section{Theory of the Stopping Interface}
\label{app:theory}

This appendix develops the theoretical foundations and assumption
boundaries for the stopping interface of AKS-QFI\@.
Section~\ref{sec:theory-decomp} establishes the error
decomposition and derives the finite-sample bound motivating
Theorem~\ref{thm:certified-stop}.
Section~\ref{sec:theory-bootstrap} characterizes bootstrap
coverage under adaptive stopping.
Section~\ref{sec:theory-stopping} provides the proof of
Theorem~\ref{thm:certified-stop} and the derivation
supporting Remark~\ref{rem:false-stop-nec}.
Section~\ref{sec:theory-stopping-params} derives closed-form expressions
for the stopping parameters introduced in
Section~\ref{sec:stopping}.

\subsection{Error Decomposition and Finite-Sample Bound}
\label{sec:theory-decomp}

\paragraph{Setup.}
Let $\rho_\theta$ be a density operator on a Hilbert space of
dimension $d$, and let $\QFI(\rho_\theta)$ denote the quantum
Fisher information at $\theta$.
Fix a Krylov order $K$ and sample count $M$.
Denote the Krylov-shadow point estimate by
$\widehat{\QFI}_{K,M}$, the Krylov-truncated population value
(the infinite-sample limit of the estimator at order $K$) by
$\QFI_K$, and the target by
$\QFI \coloneqq \QFI(\rho_\theta)$.

\paragraph{Decomposition.}
The total estimation error decomposes as
\begin{equation}
\label{eq:decomp}
\widehat{\QFI}_{K,M} - \QFI
=
\underbrace{\bigl(\widehat{\QFI}_{K,M} - \QFI_K\bigr)%
}_{\text{sampling error } S_{K,M}}
+
\underbrace{\bigl(\QFI_K - \QFI\bigr)%
}_{\text{truncation bias } B_K}.
\end{equation}
The bias $B_K$ is deterministic and controlled solely by $K$,
while $S_{K,M}$ is the sampling-side finite-$M$ error at fixed
$K$.
For a nonlinear plug-in estimator, $S_{K,M}$ can include both
random fluctuation and finite-$M$ plug-in bias.

\begin{assumption}[Krylov convergence]
\label{ass:krylov-conv}
The truncation bias satisfies
$|B_K| \le C_{\mathrm{trunc}}\,\mu^K$
for constants $C_{\mathrm{trunc}}>0$ and $\mu\in(0,1)$ that
depend on the effective spectral structure of the projected QFI
problem and on the generator used in the Krylov construction, but
not on $M$.
\end{assumption}

\begin{assumption}[Sampling-side concentration]
\label{ass:subgauss}
For each fixed $K$, write
$S_{K,M}=\widehat{\QFI}_{K,M}-\QFI_K$.
There exist $\beta_{K,M}\ge0$ and $\sigma_K>0$ such that
\[
\bigl|\mathbb{E}S_{K,M}\bigr|\le \beta_{K,M},
\]
and the centered fluctuation
$S_{K,M}-\mathbb{E}S_{K,M}$ is sub-Gaussian with proxy variance
$\sigma_K^2/M$.
\end{assumption}

Assumption~\ref{ass:krylov-conv} is expected to hold when the
Krylov basis is generated by a sufficiently expressive generator
and the projected QFI problem has a separated dominant subspace;
it can be monitored empirically via the inter-order change $d_K$.
For a linear unbiased shadow observable, $\beta_{K,M}=0$.
For the nonlinear plug-in estimator used in the benchmark,
$\beta_{K,M}$ is the finite-$M$ plug-in bias and must either be
small at the operating sample count or included in the sampling-side
calibration.
The fluctuation part of Assumption~\ref{ass:subgauss} is implied
by bounded per-shot shadow observables together with independent
randomized-measurement shots; the bound is discussed below.

\paragraph{Conditions under which Assumption~\ref{ass:krylov-conv}
may fail.}
The exponential-decay condition $|B_K| \le C_{\mathrm{trunc}}\mu^K$
with $\mu \in (0,1)$ relies on the Krylov basis effectively
approximating the dominant QFI-sensitive subspace.
Three physically relevant cases are important.
\emph{(i) Many-body localization and flat spectra.}
In disordered or many-body-localized spin systems, the relevant
QFI directions can have little spectral separation.
In that regime $\mu \approx 1$ and the bias decays negligibly
within any practical Krylov cutoff, precisely the phenomenon
observed at $n \in \{6,8\}$ (Section~\ref{sec:exp-krylov}).
\emph{(ii) Near-critical points.}
Close to a quantum phase transition, the relevant spectral gap can
close and the correlation length diverges; the effective
$\mu$ approaches $1$ from below, so $K_{\min}^{\mathrm{stop}}$
diverges accordingly.
\emph{(iii) Seed-vector alignment.}
If the initial Krylov seed vector $v_0$ has small overlap with
the dominant QFI-sensitive directions, early Krylov iterates may fail
to capture the high-weight subspace, effectively increasing
$C_{\mathrm{trunc}}$ and delaying convergence.
In all three cases, Assumption~\ref{ass:krylov-conv} may fail over the
tested range, or its useful decay regime may lie beyond the current
Krylov cutoff.
The inter-order change $d_K$ provides local evidence of this behavior.
If $d_K$ remains large near $K_{\max}$, the calibrated criterion would
require a larger Krylov cutoff or no held-out certified success
declaration.
The same-sample empirical rule can still mark the Krylov-side gate as
passed at $K_{\max}$ under a finite-resource convention, so such
decisions must be interpreted as finite-resource outcomes and audited
post hoc.
The theoretical threshold $K_{\min}^{\mathrm{stop}}(\varepsilon)$
in Eq.~\eqref{eq:kmin-theory} diverges as $\mu\to1$, signaling
that the resource limit must be increased.

\paragraph{Verifiability of Assumption~\ref{ass:subgauss}.}
The sub-Gaussian condition on the sampling error $S_{K,M}$
can be justified when the per-shot Krylov-shadow observable is
bounded, or when it is clipped to a bounded range.
If each centered per-shot contribution for a linearized estimator lies
in an interval of length $\Delta_K$, Hoeffding's lemma gives a
sub-Gaussian proxy variance $\Delta_K^2/(4M)$ for the sample mean.
For a concrete shadow estimator, $\Delta_K$ is controlled by the
measurement ensemble, the inverse shadow channel, and the norm of
the linearized projected QFI functional.
For nonlinear plug-in estimators, this argument applies after a local
linearization and a separate bound on the plug-in bias term
$\beta_{K,M}$.
During adaptive operation this bound is usually loose.
The bootstrap width $w_M$ serves as an empirical estimate of the
sampling scale, with coverage properties characterized in
Section~\ref{sec:theory-bootstrap}.
If the assumptions motivating Theorem~\ref{thm:certified-stop} fail
simultaneously, for example $\mu \approx 1$ and the bootstrap
underestimates the true variance, the practical rule can make a
premature success declaration at a resource level where neither
error source is adequately controlled.
The terminal checks identify the values that led to the decision, but
detecting this failure generally requires held-out calibration, a
higher-$K$ comparison, or a benchmark reference value.

\begin{proposition}[Finite-sample error bound]
\label{prop:error-bound}
Under Assumptions~\ref{ass:krylov-conv} and~\ref{ass:subgauss},
for any $\delta\in(0,1)$,
\begin{equation}
\label{eq:error-bound}
\begin{aligned}
\Pr\!\biggl[
  \bigl|\widehat{\QFI}_{K,M} - \QFI\bigr|
  &>
  C_{\mathrm{trunc}}\,\mu^K+\beta_{K,M}\\
  &\quad
  + \sigma_K\sqrt{\frac{2\log(2/\delta)}{M}}
\biggr]\le \delta.
\end{aligned}
\end{equation}
\end{proposition}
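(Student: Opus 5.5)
The plan is to combine the error decomposition~\eqref{eq:decomp} with the triangle inequality. This leaves a deterministic truncation term and a random sampling term, and only the latter needs a tail bound. Writing $S_{K,M}=\widehat{\QFI}_{K,M}-\QFI_K$ and $B_K=\QFI_K-\QFI$, we have pointwise
\[
\bigl|\widehat{\QFI}_{K,M}-\QFI\bigr|
\le |B_K| + \bigl|\mathbb{E}S_{K,M}\bigr| + \bigl|S_{K,M}-\mathbb{E}S_{K,M}\bigr|.
\]
Assumption~\ref{ass:krylov-conv} bounds the first term by $C_{\mathrm{trunc}}\mu^K$, and the bias part of Assumption~\ref{ass:subgauss} bounds the second by $\beta_{K,M}$. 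Both bounds are deterministic and hold with probability one. Consequently the event in~\eqref{eq:error-bound} is contained in the event
\[
\bigl|S_{K,M}-\mathbb{E}S_{K,M}\bigr| > t_\delta,
\qquad
t_\delta \coloneqq \sigma_K\sqrt{2\log(2/\delta)/M}.
\]

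The remaining step is a two-sided sub-Gaussian tail bound. Let $X=S_{K,M}-\mathbb{E}S_{K,M}$, which is centered with proxy variance $s^2=\sigma_K^2/M$. From $\mathbb{E}e^{\lambda X}\le e^{\lambda^2 s^2/2}$, the Chernoff method with the optimal choice $\lambda=t/s^2$ gives $\Pr[X>t]\le e^{-t^2/(2s^2)}$. Applying the same argument to $-X$, which has the same proxy, and taking a union bound yields $\Pr[|X|>t]\le 2e^{-t^2/(2s^2)}$. Substituting $t=t_\delta$ makes the exponent equal to $\log(2/\delta)$, so the right-hand side is exactly $\delta$. Monotonicity of probability under the event inclusion from the first step then completes the argument.

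No step here is genuinely hard. The only point requiring care is to state which convention of ``sub-Gaussian with proxy variance'' is in force. The paper's Hoeffding remark, with proxy $\Delta_K^2/(4M)$, indicates the moment-generating-function definition, and under that convention the constant $2\log(2/\delta)$ comes out exactly. A second point is that the truncation bound must be used as a sure bound rather than a probabilistic one, so that no extra union-bound slack enters. This is valid because $B_K$ depends only on $K$ and not on the sampled data.
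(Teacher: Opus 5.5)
Your proposal is correct and follows the paper's proof: the triangle inequality on the decomposition~\eqref{eq:decomp}, the deterministic bounds $|B_K|\le C_{\mathrm{trunc}}\mu^K$ and $|\mathbb{E}S_{K,M}|\le\beta_{K,M}$, and a two-sided sub-Gaussian tail bound on the centered fluctuation. The only difference is that you derive the tail bound explicitly via the Chernoff method and a union bound, and you make the moment-generating-function convention explicit, where the paper simply invokes the tail bound.
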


\begin{proof}
By the triangle inequality applied to~\eqref{eq:decomp},
\begin{equation*}
\bigl|\widehat{\QFI}_{K,M}-\QFI\bigr|
\le |B_K| + |S_{K,M}|.
\end{equation*}
Since $|B_K| \le C_{\mathrm{trunc}}\mu^K$ holds
deterministically by Assumption~\ref{ass:krylov-conv}, it
suffices to bound $|S_{K,M}|$.
By Assumption~\ref{ass:subgauss},
$|S_{K,M}|\le \beta_{K,M}
+|S_{K,M}-\mathbb{E}S_{K,M}|$.
By the sub-Gaussian tail bound
(Assumption~\ref{ass:subgauss}),
\begin{equation*}
\Pr\!\left[
  |S_{K,M}-\mathbb{E}S_{K,M}|>
  \sigma_K\sqrt{\frac{2\log(2/\delta)}{M}}
\right]\le\delta,
\end{equation*}
and the result follows.
\end{proof}

\paragraph{Relationship to observable checks.}
Proposition~\ref{prop:error-bound} motivates a two-term
calibrated error envelope for AKS-QFI\@.
The theoretical envelope components are
\begin{equation}
\label{eq:indicator-theory}
\begin{aligned}
I^{(\mathrm{trunc})}_{K}
&\coloneqq C_{\mathrm{trunc}}\,\mu^K,\\
I^{(\mathrm{stat})}_{K,M}
&\coloneqq \beta_{K,M}
 + \sigma_K\sqrt{\frac{2\log(2/\delta)}{M}}.
\end{aligned}
\end{equation}
These components satisfy
$\Pr\bigl[|\widehat{\QFI}_{K,M}-\QFI|>
I^{(\mathrm{trunc})}_{K}+
I^{(\mathrm{stat})}_{K,M}\bigr]\le\delta$.
In the numerical stopping rule, $C_{\mathrm{trunc}}\mu^K$ is
represented by the inter-order change $d_K$
(Eq.~\eqref{eq:I-trunc}), and the stochastic sampling scale is
represented by the bootstrap empirical interval width $w_M$
(Eq.~\eqref{eq:I-stat}).
The bootstrap width does not by itself estimate the plug-in bias
$\beta_{K,M}$; any such bias must be made negligible by the
sample-count choice or included through separate calibration.
The adaptive controller monitors these two empirical quantities
componentwise and uses
$\max\{d_K,w_M\}$ as its allocation severity score.
This componentwise maximum rule keeps the two error sources visible
to the stopping decision.
The calibrated envelope above states the stronger condition needed
before the same structure can support a certified error bound.
The risk parameter $\delta$ in~\eqref{eq:indicator-theory} is
the same user-specified level as in
Theorem~\ref{thm:certified-stop}; the bootstrap interval
$w_M$ is a full-width empirical quantity at this nominal level,
not the calibrated radius $I^{(\mathrm{stat})}_{K,M}$ itself.

\subsection{Bootstrap Coverage Under Adaptive Stopping}
\label{sec:theory-bootstrap}

Bootstrap intervals are used in AKS-QFI to define the sampling-side
observable quantity $w_M$.
We characterize conditions under which they achieve nominal coverage
for the fixed-$K$ population target.

\paragraph{Fixed-$(K,M)$ bootstrap coverage.}
At any fixed $(K,M)$, let
$[\widehat{L}_{K,M},\,\widehat{U}_{K,M}]$ be the equal-tailed
percentile bootstrap interval at nominal level $1-\alpha$,
constructed from $B$ replicates of the estimator.

\begin{assumption}[Smooth estimator]
\label{ass:smooth}
For each fixed $K$, the estimator can be written as a plug-in
functional $T_K(\mathbb{P}_M)$ of the empirical shadow-measurement
law $\mathbb{P}_M$, with population value
$T_K(\mathbb{P})=\QFI_K$.
The map $T_K$ is Hadamard differentiable at $\mathbb{P}$ along
square-integrable directions, with influence function $\phi_K$
satisfying $\mathbb{E}[\phi_K^2] < \infty$.
\end{assumption}

\begin{proposition}[Fixed-$(K,M)$ bootstrap validity]
\label{prop:bootstrap-fixed}
Under Assumptions~\ref{ass:subgauss} and~\ref{ass:smooth},
as $M\to\infty$ with $K$ fixed and $B \to \infty$,
\begin{equation}
\Pr\!\left[
  \QFI_K \in [\widehat{L}_{K,M},\,\widehat{U}_{K,M}]
\right]
\to 1-\alpha.
\end{equation}
\end{proposition}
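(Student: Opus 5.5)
The argument is a standard bootstrap-consistency proof via the functional delta method, carried out at fixed $K$ as $M\to\infty$. We work with the estimator written as $\widehat{\QFI}_{K,M}=T_K(\mathbb{P}_M)$, as in Assumption~\ref{ass:smooth}. The shadow measurement records $(U_m,b_m)$ are i.i.d.\ draws from $\mathbb{P}$. Hence the empirical process $\sqrt{M}(\mathbb{P}_M-\mathbb{P})$ converges weakly to a tight Gaussian limit $\mathbb{G}$ on a suitable class of square-integrable test functions.

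\textbf{Step 1: asymptotic normality.} Hadamard differentiability of $T_K$ at $\mathbb{P}$ lets us apply the functional delta method. This gives
\[
\sqrt{M}\bigl(\widehat{\QFI}_{K,M}-\QFI_K\bigr)\rightsquigarrow N(0,\tau_K^2),\qquad \tau_K^2=\mathbb{E}[\phi_K^2].
\]
The variance $\tau_K^2$ is finite by assumption. Assumption~\ref{ass:subgauss} is used here only for two purposes:
\begin{itemize}
\item to ensure the plug-in bias satisfies $\beta_{K,M}=o(M^{-1/2})$, which should follow because the delta-method remainder is $o_P(M^{-1/2})$;
\item to rule out heavy tails, so the limit is centered.
\end{itemize}
If the bias were only $O(M^{-1/2})$, the coverage limit would shift. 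This is why the bias control must be checked explicitly.

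\textbf{Step 2: bootstrap consistency.} Let $\mathbb{P}^*_M$ be the empirical law of a resample of size $M$. By the bootstrap empirical-process theorem (Gin\'e--Zinn), conditionally on the data $\sqrt{M}(\mathbb{P}^*_M-\mathbb{P}_M)\rightsquigarrow\mathbb{G}$ in probability. The delta method for the bootstrap, which requires Hadamard differentiability tangentially to the support of $\mathbb{G}$, then yields
\[
\sup_x\Bigl|\Pr^*\bigl[\sqrt{M}(T_K(\mathbb{P}^*_M)-T_K(\mathbb{P}_M))\le x\bigr]-\Phi(x/\tau_K)\Bigr|\to 0
\]
in probability. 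Uniformity in $x$ follows from Pólya's theorem, since the normal limit is continuous.

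\textbf{Step 3: percentile interval coverage.} Let $B\to\infty$, so the Monte Carlo bootstrap quantiles converge to the ideal conditional quantiles $\widehat{q}^*_{\alpha/2}$ and $\widehat{q}^*_{1-\alpha/2}$. These quantiles are taken of the root $T_K(\mathbb{P}^*_M)-T_K(\mathbb{P}_M)$. The percentile endpoints are $\widehat{\QFI}_{K,M}+\widehat q^*_{\alpha/2}$ and $\widehat{\QFI}_{K,M}+\widehat q^*_{1-\alpha/2}$. Coverage of $\QFI_K$ is therefore the event
\[
-\sqrt{M}\,\widehat q^*_{1-\alpha/2}\le \sqrt{M}(\widehat{\QFI}_{K,M}-\QFI_K)\le -\sqrt{M}\,\widehat q^*_{\alpha/2}.
\]
By Step 2 the scaled quantiles converge in probability to $\pm\tau_K z_{1-\alpha/2}$. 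This uses the symmetry of the normal limit, which is what makes the percentile (rather than basic) interval valid. Slutsky's lemma and Step 1 then give coverage converging to $1-\alpha$.

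\textbf{Main obstacle.} The hardest point is the degenerate case $\tau_K=0$, together with the projection onto the density-matrix cone. The cone projection can make $T_K$ non-differentiable at boundary (rank-deficient) states, where only directional differentiability holds and the naive bootstrap can fail. I would first invoke Assumption~\ref{ass:smooth} literally, as ensuring full Hadamard differentiability. I would then add the remark that $\tau_K>0$ is needed, since otherwise the interval degenerates.
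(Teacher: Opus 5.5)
Your proposal is correct and follows the same route as the paper's proof: asymptotic normality from Hadamard differentiability, bootstrap consistency for smooth functionals of i.i.d.\ data (which the paper simply cites from van der Vaart), then percentile-interval coverage. You make explicit the bootstrap delta method, the quantile convergence, the Slutsky step and the role of the symmetric normal limit, and your caveat that $\tau_K^2=\mathbb{E}[\phi_K^2]>0$ is a genuine condition the paper's statement omits (a constant functional satisfies Assumption~\ref{ass:smooth} yet gives coverage $1$). The one misstep is the detour through $\beta_{K,M}$: the delta method already centers the limit at $T_K(\mathbb{P})=\QFI_K$, so the expectation bias never enters the coverage event (and an $o_P(M^{-1/2})$ remainder would not by itself give an $o(M^{-1/2})$ bias without uniform integrability), meaning Assumption~\ref{ass:subgauss} plays no real role here, as in the paper's own proof.
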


\begin{proof}
Assumption~\ref{ass:smooth} ensures the estimator is
asymptotically linear with influence function $\phi_K$, so
$\sqrt{M}(\widehat{\QFI}_{K,M} - \QFI_K)$ converges weakly
to $\mathcal{N}(0, \mathbb{E}[\phi_K^2])$.
Bootstrap consistency for smooth functionals of i.i.d.\
data~\cite{vaart1998asymptotic} then gives convergence of the
bootstrap distribution to the same Gaussian limit, from which
the percentile interval achieves nominal coverage.
\end{proof}

\begin{remark}[Smoothness of the plug-in estimator]
The benchmark estimator projects reconstructed density matrices
back to the density-matrix cone before evaluating the projected QFI\@.
This projection is smooth away from eigenvalue-clipping boundaries.
Near such boundaries, Assumption~\ref{ass:smooth} should be treated
as a local regularity condition rather than an automatic consequence
of the estimator.
\end{remark}

\begin{remark}[Coverage target is $\QFI_K$, not $\QFI$]
\label{rem:coverage-target}
Proposition~\ref{prop:bootstrap-fixed} covers the
Krylov-truncated population $\QFI_K$, not the target $\QFI$.
The bias $B_K = \QFI_K - \QFI$ is not captured by the
bootstrap interval unless $K$ is large enough that
$|B_K| \ll \sigma_K/\sqrt{M}$.
This is the formal mechanism behind the false-stop pathology:
at small $K$, the interval can be narrow and nominally valid
for $\QFI_K$ while remaining far from $\QFI$.
\end{remark}

\paragraph{Bootstrap coverage under adaptive stopping.}
In the AKS-QFI loop, $(K,M)$ are chosen adaptively, so
stopping introduces a data-dependent selection effect.

\begin{assumption}[Held-out stopping selection]
\label{ass:stop-indep}
Conditional on the selected terminal pair $(K_\tau,M_\tau)=(k,m)$,
the samples used to construct
$[\widehat{L}_{k,m},\widehat{U}_{k,m}]$ are independent of the
randomness used to decide that terminal pair and are distributed as
i.i.d.\ draws from the Krylov-shadow model at order $k$.
\end{assumption}

\begin{proposition}[Adaptive bootstrap coverage]
\label{prop:bootstrap-adaptive}
Under Assumptions~\ref{ass:subgauss},~\ref{ass:smooth},
and~\ref{ass:stop-indep}, let $\tau$ be the stopping time of
AKS-QFI\@.
Let $\mathcal{E}_{k,m}$ denote the event
$(K_\tau, M_\tau) = (k,m)$ for a deterministic pair $(k,m)$.
For any fixed $k$ and any sequence of sample counts $m\to\infty$
with $\Pr(\mathcal{E}_{k,m})>0$, conditionally on
$\mathcal{E}_{k,m}$,
\begin{multline}
\Pr\!\Bigl[
  \QFI_k \in [\widehat{L}_{k,m},\,\widehat{U}_{k,m}]
\\
  \,\Bigm|\, \mathcal{E}_{k,m}
\Bigr]
\to 1-\alpha
\quad \text{as } m\to\infty .
\end{multline}
\end{proposition}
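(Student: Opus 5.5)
The plan is to reduce the conditional statement to the fixed-$(k,m)$ coverage result, Proposition~\ref{prop:bootstrap-fixed}, using the independence in Assumption~\ref{ass:stop-indep}.

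First, fix a deterministic pair $(k,m)$ with $\Pr(\mathcal{E}_{k,m})>0$. Write the bootstrap interval as a measurable function of the terminal-interval samples and the bootstrap resampling randomness: $[\widehat L_{k,m},\widehat U_{k,m}]=\Psi_{k,m}(X_1,\dots,X_m,\xi)$. The coverage event is then $A_{k,m}=\{\QFI_k\in\Psi_{k,m}(X,\xi)\}$. The event $\mathcal{E}_{k,m}$ is determined by the randomness used to select the terminal pair, which I will call the exploration randomness.

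Assumption~\ref{ass:stop-indep} says that, conditional on $\mathcal{E}_{k,m}$, the vector $(X_1,\dots,X_m)$ is i.i.d.\ from the order-$k$ Krylov-shadow model, independent of the selection randomness. I take the bootstrap multipliers $\xi$ to be drawn fresh and independently of everything else. It follows that the conditional law of $(X,\xi)$ given $\mathcal{E}_{k,m}$ equals its unconditional fixed-$(k,m)$ law. Consequently
\[
\Pr(A_{k,m}\mid\mathcal{E}_{k,m})=\Pr_{\mathrm{fixed}}(A_{k,m}),
\]
where the right-hand side is the coverage probability of the non-adaptive procedure at $(k,m)$. The identity follows by computing $\Pr(A\cap\mathcal{E})=\mathbb{E}[\mathbf 1_{\mathcal{E}}\Pr(A\mid\text{selection})]$ and dividing by $\Pr(\mathcal{E})$. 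Positivity of $\Pr(\mathcal{E}_{k,m})$ is exactly what makes this division legitimate.

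Second, apply Proposition~\ref{prop:bootstrap-fixed} at fixed $k$ along the given sequence $m\to\infty$. Assumptions~\ref{ass:subgauss} and~\ref{ass:smooth} give $\Pr_{\mathrm{fixed}}(A_{k,m})\to1-\alpha$ as $m\to\infty$ and $B\to\infty$. Combining this with the identity above yields the claim. Proposition~\ref{prop:bootstrap-fixed} holds for every sequence $m\to\infty$, so restricting to the subsequence along which $\Pr(\mathcal{E}_{k,m})>0$ is harmless.

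I expect the main obstacle to be the first step: making precise that conditioning on $\mathcal{E}_{k,m}$ does not distort the sample law. Assumption~\ref{ass:stop-indep} is phrased conditionally on the selected pair, so I would first spell it out as conditional independence of the interval samples from the selection $\sigma$-algebra, with the samples' conditional law given by the i.i.d.\ product. I would also state explicitly that the bootstrap resampling is independent. The convergence itself is inherited directly and needs no uniformity in $m$, because $k$ is fixed and the sequence is deterministic.

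Without Assumption~\ref{ass:stop-indep}, same-sample selection would tilt the conditional law, and the argument would fail. This is consistent with the paper's distinction between empirical and held-out rules.
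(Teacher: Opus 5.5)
Your proposal is correct and follows the paper's proof. Assumption~\ref{ass:stop-indep} makes the conditional law of the interval-building samples given $\mathcal{E}_{k,m}$ equal to the fixed-$(k,m)$ law, so the conditional coverage equals the fixed-pair coverage for each $m$, and Proposition~\ref{prop:bootstrap-fixed} then supplies the limit; your explicit handling of the bootstrap resampling randomness and of the role of $\Pr(\mathcal{E}_{k,m})>0$ only spells out details the paper leaves implicit.
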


\begin{proof}
For each $m$ in the sequence, Assumption~\ref{ass:stop-indep}
ensures that, conditional on $(K_\tau, M_\tau) = (k,m)$, the samples
used to construct the bootstrap interval have the same distribution as
a fixed-$(k,m)$ sample.
The result follows by applying
Proposition~\ref{prop:bootstrap-fixed} conditionally.
\end{proof}

\begin{remark}
Assumption~\ref{ass:stop-indep} holds when the stopping
decision uses a held-out batch independent of the
estimation batch.
In the same-sample rule used here, these quantities are computed
from the same batch as the estimate, so
Assumption~\ref{ass:stop-indep} is not literally satisfied.
Quantifying the resulting coverage distortion is a natural target
for a held-out-batch study.
\end{remark}

\paragraph{Finite-horizon sequential certificate.}
Proposition~\ref{prop:finite-horizon-heldout} in the main text gives
the finite-horizon held-out certificate associated with the
held-out component-aware interface.
It is stronger than the same-batch bootstrap statement above because
the adaptive exploration data can only nominate a candidate stop; the
reported estimate and statistical radius come from an independent
confirmation batch.
For the finite resource limits in the benchmark, we use Bonferroni
spending $\delta_j=\delta/J$; for an open-ended run, a summable
schedule such as $\delta_j=6\delta/(\pi^2j^2)$ gives the same union
bound with $J=\infty$.
The proposition requires a calibrated radius that includes both a
pre-registered Krylov truncation term and a confirmation-batch
sampling term.

\subsection{Proofs for Section~\ref{sec:stopping}}
\label{sec:theory-stopping}

This section provides the proof of
Theorem~\ref{thm:certified-stop} and the componentwise
interpretation in Remark~\ref{rem:false-stop-nec}, which are
stated in the main text.

\begin{definition}[Certified stopping claim]
\label{def:certified}
A stopping event at $(K,M)$ with tolerance $\varepsilon$ and
risk $\delta$ is \emph{certified} if
\begin{equation}
\Pr\!\left[
  \bigl|\widehat{\QFI}_{K,M} - \QFI\bigr| > \varepsilon
\right] \le \delta.
\end{equation}
\end{definition}

\begin{proof}[Proof of Theorem~\ref{thm:certified-stop}]
For a fixed pair $(K,M)$, Proposition~\ref{prop:error-bound}
gives
\begin{equation*}
\Pr\!\left[
  \bigl|\widehat{\QFI}_{K,M} - \QFI\bigr|
  >
  I^{(\mathrm{trunc})}_{K}
  + I^{(\mathrm{stat})}_{K,M}
\right] \le \delta.
\end{equation*}
Here the two components are calibrated upper bounds of the
form in Eq.~\eqref{eq:indicator-theory}.
Under conditions~\eqref{eq:suff-trunc-main}
and~\eqref{eq:suff-stat-main},
$I^{(\mathrm{trunc})}_{K}
+ I^{(\mathrm{stat})}_{K,M} \le \varepsilon$, so the event
$\{|\widehat{\QFI}_{K,M}-\QFI|>\varepsilon\}$ is contained
in the event on the left-hand side and
Eq.~\eqref{eq:certified-main} follows.
For an adaptively selected terminal pair, the same conclusion
requires that the sampling-side component remain valid at the
terminal time, for example through a held-out batch
or a time-uniform confidence sequence. This is the condition
stated in the theorem; a fully sequential finite-sample guarantee
for the bootstrap width would need such a construction.
\end{proof}

\paragraph{Derivation of Remark~\ref{rem:false-stop-nec}.}
If calibrated component bounds satisfy
conditions~\eqref{eq:suff-trunc-main}
and~\eqref{eq:suff-stat-main}, then
Theorem~\ref{thm:certified-stop} bounds the probability of a
post hoc error exceeding $\varepsilon$ by $\delta$.
Thus a single false success declaration is compatible with a valid
certificate through the allowed $\delta$-probability event.
A systematic false-success rate above the claimed risk level, or a
false success produced by observable scores that were never calibrated
as component radii, indicates that at least one component condition has
not been established at the tolerance scale.
In the benchmark trajectories this gap is visible on the
truncation side: the empirical gate can pass at the terminal
low-$K$ point, but higher-$K$ trajectories and the reference
value reveal a still-large truncation bias.

\paragraph{Practical gap.}
The theoretical constants $(C_{\mathrm{trunc}}, \mu, \sigma_K)$
in~\eqref{eq:indicator-theory} are unknown during a run and are
replaced in the numerical rule by measured quantities
$(d_K, w_M)$.
Theorem~\ref{thm:certified-stop} therefore serves as a
calibration principle for the present adaptive rule.
It explains why a stopping rule must track both the sampling
spread and the Krylov truncation side before it can support
a calibrated success claim.

\subsection{Stopping-Parameter Selection: Closed-Form Criteria}
\label{sec:theory-stopping-params}

The stopping parameters $K_{\min}^{\mathrm{stop}}$,
$M_{\min}^{\mathrm{stop}}$, and patience $P$ introduced in
Section~\ref{sec:stopping} are chosen by applying
Theorem~\ref{thm:certified-stop} as follows.

\paragraph{Minimum Krylov order.}
Under Assumption~\ref{ass:krylov-conv}, the condition
$I^{(\mathrm{trunc})}_K \le \varepsilon/2$ is satisfied when
\begin{equation}
\label{eq:kmin-theory}
K \ge K_{\min}^{\mathrm{stop}}(\varepsilon)
\coloneqq
\max\!\left\{1,\left\lceil
  \frac{\log(2C_{\mathrm{trunc}}/\varepsilon)}{\log(1/\mu)}
\right\rceil\right\}.
\end{equation}
When $(C_{\mathrm{trunc}}, \mu)$ are unknown, they are
estimated from the observed inter-order change sequence
$\{d_k\}_{k=1}^K$ by fitting an exponential decay
$d_k \approx \hat{C}\hat{\mu}^k$.
The default $K_{\min}^{\mathrm{stop}} = 4$ corresponds
to this empirical estimate under the tested noise family and
should be recalibrated for new problem classes.

\paragraph{Minimum sample count.}
When the plug-in bias term $\beta_{K,M}$ is negligible or has
already been included in the sampling-side calibration, the
condition $I^{(\mathrm{stat})}_{K,M} \le \varepsilon/2$
is satisfied by the stochastic threshold
\begin{equation}
\label{eq:mmin-theory}
M \ge M_{\min}^{\mathrm{stop}}(\varepsilon,\delta)
\coloneqq
\left\lceil
  \frac{8\sigma_K^2 \log(2/\delta)}{\varepsilon^2}
\right\rceil.
\end{equation}
More generally, if an upper bound $\bar{\beta}_K$ on the plug-in
bias is available over the relevant sampling range,
the corresponding sufficient condition, provided
$\bar{\beta}_K<\varepsilon/2$, is
\[
M \ge
\left\lceil
\frac{2\sigma_K^2\log(2/\delta)}
     {(\varepsilon/2-\bar{\beta}_K)^2}
\right\rceil .
\]
If only an $M$-dependent bound $\beta_{K,M}$ is available, the
required sample count is the smallest $M$ satisfying
$\beta_{K,M}+\sigma_K\sqrt{2\log(2/\delta)/M}\le\varepsilon/2$.
If the available bias bound is at least $\varepsilon/2$ over the
tested range, additional samples alone do not certify the stopping
claim without reducing or bounding the plug-in bias.
A high-confidence estimate of $\sigma_K^2$ can be obtained from the
per-shot sample variance for a linear or locally linearized estimator
at fixed $K$, or from $M$ times the variance of bootstrap replicates
of the plug-in estimate.
At $\varepsilon = 0.2$ and $\delta = 0.1$,
Eq.~\eqref{eq:mmin-theory} gives
$M_{\min}^{\mathrm{stop}}\approx 599\,\sigma_K^2$.
Thus an observed scale estimate $\sigma_K\approx0.56$ gives
$M_{\min}^{\mathrm{stop}}\approx188$, close to the benchmark
default of $128$, whereas the larger choice $\sigma_K=1$
would require a substantially larger threshold.
The benchmark value is calibrated to the observed sampling scale of the
tested noise family; transfer to other instances requires
recalibration.

\paragraph{Stopping patience.}
The following proposition establishes the theoretical basis
for the patience parameter $P$.

\begin{proposition}[Patience and false-stop suppression]
\label{prop:patience}
Let $p_{\mathrm{bad}} \in (0,1)$ be the per-eligible-step
probability that the componentwise stopping gates pass while
the true error exceeds $\varepsilon$.
Under the assumption that consecutive eligible-step readings
are independent, the probability of $P$ consecutive false
readings is $p_{\mathrm{bad}}^P$, which decreases
geometrically in $P$.
\end{proposition}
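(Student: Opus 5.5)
The plan is to formalize the patience gate as a run of indicator events and then apply the product rule for independent events. Fix an eligible step index $t$, meaning one at which the gates $K\ge K_{\min}^{\mathrm{stop}}$ and $M\ge M_{\min}^{\mathrm{stop}}$ already hold. Let $G_t$ be the event that the componentwise gates pass at step $t$, that is,
\[
\bigl(\widetilde I^{(\mathrm{trunc})}_{K_t,M_t}\le\varepsilon \text{ or } K_t=K_{\max}\bigr)
\quad\text{and}\quad
\widetilde I^{(\mathrm{stat})}_{K_t,M_t}\le\varepsilon,
\]
and let $E_t$ be the event $|\widehat{\QFI}_{K_t,M_t}-\QFI|>\varepsilon$. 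A \emph{false reading} at step $t$ is then the event $A_t\coloneqq G_t\cap E_t$. The hypothesis of the proposition is exactly that $\Pr(A_t)=p_{\mathrm{bad}}$ at every eligible step.

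By Algorithm~\ref{alg:aks-qfi}, the patience counter $p$ reaches $P$ at step $t$ only if $G_{t-P+1},\dots,G_t$ all hold. A false stop ending at step $t$ therefore requires those $P$ passes. The statement as posed concerns $P$ consecutive false readings, so the target event is $\bigcap_{s=t-P+1}^{t}A_s$.

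Under the stated independence of consecutive eligible-step readings, the events $A_{t-P+1},\dots,A_t$ are mutually independent. Hence
\[
\Pr\Bigl[\textstyle\bigcap_{s} A_s\Bigr]=\prod_{s}\Pr(A_s)=p_{\mathrm{bad}}^P .
\]
Since $p_{\mathrm{bad}}\in(0,1)$, the map $P\mapsto p_{\mathrm{bad}}^P=e^{-P\log(1/p_{\mathrm{bad}})}$ is strictly decreasing and geometric with ratio $p_{\mathrm{bad}}$, which gives the conclusion.

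The step I expect to need care is not the product computation but stating exactly which events are assumed independent and identically distributed. At different eligible steps, $(K,M)$ may change, so $\Pr(A_s)$ could differ from step to step. I would handle this by reading $p_{\mathrm{bad}}$ as a uniform upper bound, $\Pr(A_s)\le p_{\mathrm{bad}}$, which turns the equality into $\le p_{\mathrm{bad}}^P$ and keeps the geometric decay.

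I would also add a remark on dependence. Under nested prefix reuse, successive estimates share samples, so the independence assumption is idealized. This is consistent with how the paper frames the result as an idealized model, and the held-out construction of Proposition~\ref{prop:finite-horizon-heldout} is the route that avoids relying on this assumption.
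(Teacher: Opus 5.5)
Your argument is correct and is essentially the paper's approach. The paper gives no separate proof and treats the claim as the immediate product rule for $P$ independent events, each of probability $p_{\mathrm{bad}}$; it then adds the same caveat you raise, that overlapping samples make independence an idealization, so actual suppression may be weaker than $p_{\mathrm{bad}}^P$. Your reading of $p_{\mathrm{bad}}$ as a uniform upper bound, which gives $\le p_{\mathrm{bad}}^P$ when $(K,M)$ changes across eligible steps, is a harmless and slightly more careful refinement.
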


The independence assumption is an idealization: consecutive
readings share the same model state and partially overlapping
samples, so actual suppression may be weaker than
$p_{\mathrm{bad}}^P$.
Nevertheless, increasing $P$ still tightens the component-aware rule.
Setting $P = 2$ reduces a per-step false-stop probability of
$0.5$ to $0.25$, and $P = 3$ to $0.125$.

\paragraph{Joint stopping criterion.}
Combining the assumption-calibrated half-tolerance conditions, an
ideal assumption-calibrated rule would allow a run to declare
success only when
\begin{equation}
\label{eq:stop-criterion}
\begin{aligned}
K \ge K_{\min}^{\mathrm{stop}}(\varepsilon),
\\
M \ge M_{\min}^{\mathrm{stop}}(\varepsilon,\delta),
\\
I^{(\mathrm{trunc})}_{K} \le \frac{\varepsilon}{2},
\qquad
I^{(\mathrm{stat})}_{K,M} \le \frac{\varepsilon}{2}.
\end{aligned}
\end{equation}
The componentwise inequalities must hold for $P$ consecutive
eligible steps.
The practical rule has the same structure but
uses the measured quantities $(d_K,w_M)$ and user-level gates
$d_K\le\varepsilon$ and $w_M\le\varepsilon$, equivalently
$\max\{d_K,w_M\}\le\varepsilon$, with the finite-resource convention
that reaching $K_{\max}$ also passes the Krylov-side gate.
The factor-of-two distinction separates the theorem's calibrated
component bounds from the empirical gates used by the rule.
The benchmark therefore evaluates the resulting stopping decisions
post hoc against a reference value.
The default benchmark values
$(K_{\min}^{\mathrm{stop}}, M_{\min}^{\mathrm{stop}}, P)
= (4, 128, 2)$ stated in Eq.~\eqref{eq:stop-defaults}
approximate the theoretical
thresholds~\eqref{eq:kmin-theory}--\eqref{eq:mmin-theory}
under the tested noise family.
For a new problem class with different
$(C_{\mathrm{trunc}}, \mu, \sigma_K)$, these expressions
provide closed-form recalibration formulas that do not
require re-running the threshold ablation sweep.

\begin{remark}[Consistency with the ablation grid]
The ablation grid in Appendix~\ref{app:threshold-ablation-table}
sweeps
$(K_{\min}^{\mathrm{stop}}, M_{\min}^{\mathrm{stop}}, P)
\in \{2,4,6\} \times \{32,128,256\} \times \{1,2,3\}$.
Under criterion~\eqref{eq:stop-criterion}, settings below
the theoretical thresholds in $K$ or $M$ are predicted to
exhibit residual false stops when truncation or sampling
error is insufficiently controlled, consistent with the
threshold-grid structure in Figure~\ref{fig:threshold-ablation-pareto}.
\end{remark}

\section{Larger-system calibration-transfer tests at
\texorpdfstring{$n \in \{6,8\}$}{n in \{6,8\}} qubits}
\label{app:scaling}

This appendix reports stopping reliability results at
$n \in \{6,8\}$ qubits under the same small-resource protocol as
Section~\ref{sec:exp-setup}, but with the relative tolerance
$\varepsilon_{\mathrm{rel}}=0.05$
($p_{\mathrm{dep}}=0.03$, $(K_{\max},M_{\max})=(8,512)$,
10 independent replicates per setting).
These runs ask whether the default $n=4$ stopping parameters
$(K_{\min},M_{\min},P)=(4,128,2)$ transfer without recalibration.
They are separated from the main benchmark because, at these system
sizes, the resource limit $K_{\max}=8$ does not provide enough Krylov
resolution for the truncation bias to decay to the target tolerance.
The data therefore identify a calibration transfer point for the default
stopping rule; a fully recalibrated AKS-QFI scaling study would require
larger Krylov cutoffs or problem-specific truncation radii.

\paragraph{Calibration-transfer results.}
For both system sizes, width-only stopping exhibits
$\mathrm{FSR}=1.00$ at every noise level (95\% CI
$[0.72, 1.00]$, $n=10$ runs), with all runs terminating at
$K=2$, $M=16$.
The near-zero QFI estimates at $K=2$ have spuriously narrow
bootstrap CIs, satisfying the width-only stopping criterion
immediately regardless of $\varepsilon_{\mathrm{rel}}$.

For $n=6$, component-aware stopping reduces FSR at moderate-to-high
dephasing ($\mathrm{FSR}=0$ at $p_\phi \ge 0.12$), but
retains residual false stops at low dephasing
($\mathrm{FSR}=0.50$, 95\% CI $[0.24,0.76]$ at
$p_\phi=0$; $\mathrm{FSR}=0.20$, 95\% CI $[0.06,0.51]$
at $p_\phi=0.06$).
The component-aware runs have median effective terminal sample count
$384$ at $p_\phi=0$ and $512$ at the remaining tested
noise levels.
For $n=8$, the default $n=4$ component-aware calibration does not
suppress false stops:
$\mathrm{FSR}=1.00$ at all noise levels, with all runs declaring
success at $K=4$, $M=256$.
The component-aware rule's minimum Krylov order $K_{\min}=4$ is reached
before the truncation bias has decayed sufficiently, so
the empirical stopping criterion is satisfied at a biased estimate.
Because each setting uses only 10 replicates, the intervals are wide;
the qualitative conclusion is the systematic pattern of premature
success declarations under the un-recalibrated small-resource protocol.

\paragraph{Slow Krylov convergence at scale.}
Section~\ref{sec:exp-krylov} reports fitted decay rates
$\hat{\mu}(B_K) = 0.98$ ($n=6$) and $\hat{\mu}(B_K) = 1.00$
($n=8$), compared to $\hat{\mu}(B_K) = 0.89$ for $n=4$.
At $\hat{\mu} \approx 1$, the truncation bias
$|B_K| \approx C_{\mathrm{trunc}} \cdot \mu^K$ does not
meaningfully decrease within $K \le K_{\max}=8$.
The theoretical minimum Krylov order
(Eq.~\eqref{eq:kmin-theory}) depends inversely on
$\log(1/\mu)$; as $\mu \to 1$, $K_{\min}$ diverges,
meaning the calibrated Krylov-side condition cannot be met within a
fixed small resource limit.

Recalibrating the component-aware rule for $n \ge 6$ therefore requires
either increasing $K_{\max}$ substantially (with
corresponding increases in circuit depth and measurement cost),
or identifying noise regimes where $\mu$ is sufficiently
below 1.
The closed-form threshold expressions of
Appendix~\ref{sec:theory-stopping-params} provide the recalibration
tool once $\mu$ and $C_{\mathrm{trunc}}$ are estimated
from a short calibration run.
The present runs identify where the default calibration stops being
reliable; they do not rule out larger-system operation after
problem-specific Krylov and sampling calibration.

\bibliographystyle{quantum}
\bibliography{refs}

\end{document}